\DeclareMathOperator{\spe}{sp}
\DeclareMathOperator{\pdf}{pdf}
\DeclareMathOperator{\hgt}{hgt}
\DeclareMathOperator{\hyb}{hyb}
\DeclareMathOperator{\ext}{ext}
\DeclareMathOperator{\Ta}{Taxa}
\def\ps@pprintTitle{%
 \let\@oddhead\@empty
 \let\@evenhead\@empty
  \def\@oddfoot{\centerline{\thepage}}%
 \let\@evenfoot\@oddfoot}
\newtheorem{theorem}{Theorem}[section]
\newtheorem{lemma}[theorem]{Lemma}
\theoremstyle{definition}
\newcommand{\T}{\mathcal{T}}
\journal{}
\begin{document}

\begin{frontmatter}

\title{Comparing the topology of phylogenetic network generators}

\author{Remie Janssen{\footnotesize$^1$}}
\author{Pengyu Liu{\footnotesize$^2$}\footnote{To whom correspondence should be addressed; e-mail: pengyu\_liu@sfu.ca.}}

\address{$^1$Delft Institute of Applied Mathematics, Delft University of Technology, Delft, The Netherlands}
\address{$^2$Department of Mathematics, Simon Fraser University, Burnaby, Canada}

\begin{abstract}
Phylogenetic networks represent evolutionary history of species and can record natural reticulate evolutionary processes such as horizontal gene transfer and gene recombination. This makes phylogenetic networks a more comprehensive representation of evolutionary history compared to phylogenetic trees. Stochastic processes for generating random trees or networks are important tools in evolutionary analysis, especially in phylogeny reconstruction where they can be utilized for validation or serve as priors for Bayesian methods. However, as more network generators are developed, there is a lack of discussion or comparison for different generators. To bridge this gap, we compare a set of phylogenetic network generators by profiling topological summary statistics of the generated networks over the number of reticulations and comparing the topological profiles.

\end{abstract}
%Network generating processes are used to create test sets for phylogenetic reconstruction methods, and as prior for Bayesian methods. 
%We compare network generating processes by profiling topological summary statistics of the generated networks over number of reticulations. 
%When the networks have few reticulations, these statistics are mainly determined by the underlying tree generator (when this exists) and, when more reticulations are added, the type of reticulating process modelled by the generator seems to be the main determinant of the statistics. Moreover, by considering a polynomial associated with each network, we show that even some unrelated generators can produce very similar networks. 

\end{frontmatter}

%\linenumbers

\section{Introduction}
\label{S1}

A phylogenetic network is a directed acyclic graph that represents the evolutionary history of a set of species \cite{huson2010phylogenetic}. With the recent advancements in the theory of phylogenetic networks and the accumulation of evolutionary data, it is becoming increasingly viable and common to represent the evolutionary hierarchical relations with phylogenetic networks instead of phylogenetic trees, which cannot record reticulate evolutionary processes such as horizontal gene transfer (also known as lateral gene transfer), hybrid speciation, hybrid introgression, and recombination \cite{elworth2019advances,blair2020phylogenetic,blais2021past}.

%It is becoming increasingly common in biology to represent the evolutionary history of a set of species by a directed acyclic graph instead of a phylogenetic tree, as trees cannot record reticulate evolutionary processes such as horizontal gene transfer (also known as lateral gene transfer), hybrid speciation, hybrid introgression, and recombination \cite{elworth2019advances,blair2020phylogenetic,blais2021past}. %see \cite{morando2020phylogenomic,yang2021extensive} for some recent examples of reconstructed phylogenetic networks.
%Such a directed acyclic graph, as an extension of a phylogenetic tree, is called a phylogenetic network \cite{huson2010phylogenetic}. 

In evolutionary analysis, the stochastic process that generates random trees and random networks play an important role, especially in the reconstruction of phylogenies. Similar to phylogenetic tree reconstruction, there are two main types of methods for rebuilding evolutionary histories as phylogenetic networks: combinatorial methods and model based methods. Combinatorial methods are based on, for example, distances between taxa \cite{bordewich2016determining,bordewich2018recovering,van2020reconstructibility} or gene trees \cite{baroni2005bounding,linz2019attaching,whidden2013fixed,van2019practical,van2019polynomial}, while model based methods involve Bayesian statistics \cite{zhang2018bayesian,zhu2018bayesian} or likelihood calculations \cite{yu2015maximum,solis2016inferring} in models of sequence and lineage evolution. For both types of methods, random phylogenetic networks are needed in validating the reconstruction methods. Moreover, for Bayesian methods, the stochastic processes used to generate random networks also serve as priors \cite{zhang2018bayesian}. In this paper, we call the stochastic processes that generate random phylogenetic networks the {\em network generators}. Note that this is different from the generator of a network defined in \cite{van2009constructing}, which encodes the underlying reticulate structure of a phylogenetic network. 

%In this paper, we compare several stochastic processes that generate random phylogenetic networks, which we call {\em network generators} for short. Note that this is different from the generator of a network defined in \cite{van2009constructing}, which encodes the underlying reticulate structure of a phylogenetic network.

%To compare the network generators, we utilize topological characteristics for phylogenetic networks. These characteristics include 
%We profile the change of the characteristics over the number of reticulations for networks with 100 leaves generated by each of the generators, and perform comparison based on the profiles.

There are a few approaches that existing network generators usually take to constructing random networks. One approach is to alter a tree generator and add hybrid edges, for example, network generators \cite{van2019practical,pons2019generation} that are based on the Yule-Harding model \cite{harding1971probabilities} or on a continuous-time Markov model of speciation to a birth-hybridization Markov model \cite{zhang2018bayesian,morin2006netgen}. The hybrid edges can either be added during the process of building a tree or after a tree is generated. For these methods, there is also an option to add the hybrid edges in a way that favours short-distance (local) hybrid edges \cite{morin2006netgen,pons2019generation}. 

Other approaches that do not rely on tree generating processes include assembling a network directly from degree constraints \cite{zhang2016tree} and sampling a network in the network space by random local surgeries, e.g., rearrangement moves which move one endpoint of an edge \cite{gambette2017rearrangement,janssen2021rearranging}. The latter is used in Bayesian methods for sampling posteriors \cite{zhang2018bayesian,zhu2018bayesian}.
In order to present a more comprehensive comparison of network generators, we include two new network generators in our analysis, in addition to the ones already available in the literature. The first is based on the beta-splitting model for trees \cite{aldous1996probability}, which we extend to networks by adding edges to a generated tree in various ways. The second is based on a birth-death process \cite{heath2008taxon} which we extend with distance dependent hybrid speciation events.

For phylogenetic trees, it is well discussed which stochastic process generates phylogenetic trees that resemble trees reconstructed from data \cite{blum2006random}, and random tree generators are well compared from different points of view \cite{steel2016phylogeny,Liu20202,choi2020cherry,bienvenu2020revisiting}. However, as increasingly many network generators are developed, there is a lack of discussion about the differences and similarities of generated random networks. To bridge this gap, we examine and compare network generators by profiling topological summary statistics of the generated networks over the number of reticulations in the networks and comparing the topological profiles. The topological summary statistics used for the comparison include network balance, number of cherries, number of reticulated cherries, number, size and level of blobs as well as network class. Moreover, we generalize the recent polynomial representation for trees \cite{LIU20211} to networks and leverage the high-resolution polynomial representation in comparing small phylogenetic networks. 

We show that generators based on different types of approaches can generate networks with similar topological profiles. If a network generator is based on a tree generator, the topological characteristics are dominated by the topology of the underlying trees when the number of reticulations is small; and as the number of reticulations increases, the topological characteristics may diverge with respect to the integrated reticulation structures and the locality of added reticulations.

\section{Methods}
\label{S2}

\subsection{Topological profiles}

A {\em phylogenetic network} or simply a {\em network} is a rooted directed acyclic graph whose out-degree zero nodes ({\em leaf nodes}) have in-degree one. We study binary networks, that is, every internal node in the network has in-degree one and out-degree two ({\em tree nodes}) or in-degree two and out-degree one ({\em reticulation nodes}). In a phylogenetic network, a reticulation event is represented by a reticulation node and the edges pointing to the reticulation node which are called \emph{hybrid edges}. A reticulation node together with the hybrid edges incident to it is a {\em reticulation structure}.

%The other endpoint of a hybrid edge than the reticulation node is called a \emph{hybrid node}. 

The {\em topological profile} of a network generator is defined to be the topological summary statistics of generated networks as a function of number of reticulations. In other words, the topological profile of a network generator describes how the topological summary statistics change as the number of reticulations increases. We choose commonly used topological characteristics of phylogenetic networks for our comparison. These characteristics include network balance, number of cherries and reticulated cherries, size and level of blobs as well as proportions of generated networks that belong to various network classes.

%We need characteristics of the networks for comparing the network generators. Here, we introduce the characteristics that we use in our study. Most of the characteristics are commonly used for phylogenetic networks. In addition, we also introduce a new representation for phylogenetic networks based on a tree-distinguishing polynomial for our comparison. 

%In this section, we will discuss the networks properties that we will study for the network generators. All of these, expect for the polynomial, are commonly used in research on phylogenetic networks.

\subsubsection{Network balance}
%The only known balance metric for networks is the extended $B_2$ measure \cite{bienvenu2020revisiting}. 

%We could also compute the balance of the average of all displayed trees, using other balance measures (e.g., Colless, Sackin, $B_2$ for trees). However, this computation is too slow for networks with a large number of reticulations.

The balance of phylogenetic trees is a well studied topological characteristic. The balance for phylogenetic networks is recently revisited in \cite{bienvenu2020revisiting}, where the $B_2$ balance for phylogenetic networks \cite{shao1990tree} is reintroduced. The $B_2$ balance is an index based on the entropy of probability distribution on leaf nodes. Specifically, let $N$ be a network with the set of leaf nodes denoted by $L$ and $p_l$ be the probability of a directed random walk starting from the root and ending at the leaf node $l$, then the $B_2$ balance of $N$ can be computed by $B_2(N)=-\sum_{l\in L} p_l \log_2(p_l)$.

%Balance indices for networks were recently revisited by Bienvenu, Cardona, and Scornavacca \cite{bienvenu2020revisiting}. 
%They reintroduce Shao and Sokal’s $B_2$ balance for phylogenetic networks \cite{shao1990tree}. 
%This balance index can be computed in linear time using a simple dynamic programming algorithm, and is well-defined for arbitrary networks. Let $N$ be a network with leaf set $L$, then the $B_2$ balance is 
%\[B_2(N)=-\sum_{l\in L} p_l \log_2(p_l),\]
%where $p_l$ is the probability of reaching $l$ when taking a random path down from the root, and picking an outgoing edge uniformly at random when there is a choice. 
The maximal value of $B_2$ balance for a network with $n$ leaf nodes is $\log_2(n)$, which corresponds to the most balanced case where every leaf node has probability $1/n$. For our comparison, we use the normalized $B_2$ balance, which is computed by dividing the $B_2$ balance a of network with $n$ leaf nodes by $\log_2(n)$.

%Note that the maximal possible value of $B_2$ for a network with $n$ leaves is is $\log_2(n)$. Hence, to compare the balance of networks with different numbers of leaves, we normalize the $B_2$ balance by dividing by $\log_2(n)$, the resulting number between $0$ and $1$ is what we will call balance.
%\paragraph{$B_2$ profile}

\subsubsection{Cherries and reticulated cherries}
The number of small subgraphs such as cherries and pitchforks is another well-studied topological characteristics for phylogenetic trees \cite{choi2020cherry}. A {\em cherry} of a network is a subgraph consisting of two leaf nodes $x$ and $y$ that share a common parent node. For our comparison of phylogentic networks, in addition to the number of cherries, we also compute the number of {\em reticulated cherries} which are subgraphs consisting of two leaf nodes $x$ and $y$, a reticulation node $w$ as the parent node of $x$ (or $y$), and the common parent of $y$ and $w$ (or $x$ and $w$) \cite{bordewich2016determining}. 

%In random trees, one may count the number of cherries and of pitchforks \cite{choi2020cherry}. Because we deal with networks, we count cherries and reticulated cherries. A cherry is an unordered pair of leaves $(x,y)$ such that $x$ and $y$ share a parent. A reticulated cherry is an ordered pair of leaves $(x,y)$ such that the parent of $x$ is a reticulation, and this nodes shares a parent with $y$.

\subsubsection{Blobs}

A blob in a network is a maximal connected subgraph without cut edges and consisting of at least two nodes \cite{gusfield2005fundamental}. %We study both the number of nodes in the blobs, and the number of reticulations nodes in the blobs. 
The {\em size} of a blob in a network is the number of nodes in the blob; the {\em level} of a blob is the number of reticulation nodes in the blob; and the {\em level} of a network is the maximal level over all its blobs. 

%induced by a biconnected component of the underlying undirected graph of the network consisting of at least 2 nodes. The level of a blob is the reticulation number of the blob, and the level of a network is the maximum level of the blobs of the network. We count the number of blobs in a network, the size of (i.e., number of nodes in) each blob, and the level of each blob (and thus of the network).

\subsubsection{Network classes}
To overcome the obstacles faced in extending the mathematical theory of phylogenetic trees to phylogenetic networks, some subclasses of phylogenetic networks were introduced. These network classes include tree-child networks, stack-free networks, orchard networks, and tree-based networks. We examine the proportion of generated networks that belong to each of the classes as another topological characteristic in the profile of a generator. 

%For each of the networks we generate, we check whether they are contained in some well-known classes of phylogenetic networks. The classes we consider are tree-child networks, stack-free networks, orchard networks, and tree-based networks. 

A network is {\em tree-based} if it has an embedded spanning tree whose leaf nodes are exactly the leaf nodes of the network \cite{francis2015phylogenetic}. A network is {\em orchard} if it can be reduced to a network consisting of a single vertex by a sequence of cherry reduction operations \cite{janssen2021cherry,erdHos2019class}. A stack in a phylogenetic network is a pair of adjacent reticulation nodes. A network is {\em stack-free} if there are no stacks in the network \cite{semple2018phylogenetic}. A network is {\em tree-child} if each internal node of the network has at least one child that is not a reticulation node \cite{cardona2008comparison}. 

%Another characterization of binary tree-based networks can be given in terms of forbidden shapes. A network is tree-based if it contains no W-shapes \cite{}. These are subgraphs that consist of a set of nodes $(R_l,R_0,T_0,\ldots, T_{n-1},R_{n},R_r)$ where $T_i$ is the parent of $R_i$ and $R_{i+1}$, $R_l$ is the parent of $R_0$, and $R_r$ is the parent of $R_n$ (Figure~\ref{}).%\todo{Figure of a W-shape, stack, non-treechild-thingey}

%Picking a cherry $(x,y)$ means to remove the leaf $x$ and suppressing the common parent of $x$ and $y$ if it has become an indegree-1 outdegree-1 node; Picking a reticulated cherry $(x,y)$ consists of removing the edge $(p_y,p_x)$ between the parents of $x$ and $y$, and then suppressing $p_x$ and $p_y$ is necessary. 

%A network is stack-free if no two reticulations are connected by an edge \cite{}. 
%Two reticulations connected by an edge are called a stack, hence the name. Note that each W-shape contains two stacks, so a stack-free network is necessarily also a tree-based network. A stack-free network does not have to be orchard, nor the other way around \cite{huber2019rooting}.

%For binary networks, this means the network has no stacks, and no tree node has two reticulation nodes as its children \cite{AltDefTree-Child}. 

For binary networks, a tree-child network is also a stack-free network as well as an orchard network \cite{janssen2021cherry}.
Moreover, every stack-free network or orchard network is also tree-based \cite{zhang2016tree,huber2019rooting,van2020unifying}.

%Moreover, each stack-free network is also tree-based \cite{zhang2016tree} and the same holds for each orchard network \cite{huber2019rooting,van2020unifying}.

\subsection{Polynomial comparison}

A polynomial representation for phylogenetic trees was introduced in \cite{Liu20202}. The polynomial representation is an isomorphic invariant for trees, that is, two trees are isomorphic if and only if they have the same polynomial representation. Moreover, the polynomials for rooted trees are irreducible, that is, they cannot be factored as a multiplication of smaller polynomials \cite{LIU20211}. We leverage these results to extend the polynomial representation for further comparing phylogenetic network generators. 

A natural way to connect phylogenetic networks with trees is by considering their sets of spanning trees. Let $N$ be a phylogenetic network and $\T_N$ be the deck of rooted spanning trees of $N$, that is, $\T_N$ may contain identical elements. It is natural to represent $N$ with the polynomial $P(N,x,y)=\prod_{T \in \T_N}P(T,x,y)$. It is unknown in general whether graphs can be reconstructed from their decks of spanning trees. However, it has been proved that two tree-child networks are isomorphic if and only if they have the same set of embedded spanning trees \cite{francis2018identifiability}. If we compute the polynomial for tree-child networks based on the set of embedded spanning trees, it is immediate that two tree-child networks are isomorphic if and only if they have the same polynomial. However, although not all networks are tree-child networks, it is still useful to compute their polynomials for comparison.
%Due to the computation complexity, the products of polynomials are not always feasible to compute, especially for networks with large number of leaf nodes. Therefore, we define the average polynomial for a network $N$ as $\bar{P}(N,x,y)= \sum_{T \in \T_N}P(T,x,y)/\abs{\T_N}$ where $\abs{\T_N}$ is the number of spanning trees of $N$.

To compare the networks, We use the polynomial distance defined in \cite{Liu20202}, that is, the Canberra distance \cite{Lance} between the polynomial coefficient vectors of the networks. Then we use multi-dimensional scaling \cite{mds} to visualize the distances, and we deploy k-medoids clustering \cite{kmedoids} on the distances to determine if networks generated by different generators can be distinguished by the polynomial distance. 

%We use the Canberra distance \cite{Lance} to compare polynomial representations of networks. Let $N_1$, $N_2$ be two networks and $c_1^{(i,j)}$ and $c_2^{(i,j)}$ be the coefficients of $cx^iy^j$ term of the polynomials $P(N_1,x,y)$, $P(N_2,x,y)$. We define the function $\gamma$ as follows.

%\begin{equation*}
%\gamma(c_1,c_2)=
%  \begin{cases}
%    \abs{c_1-c_2}/(c_1+c_2)       & \quad \text{if } c_1\neq 0 \text{ or } c_2\neq 0\\
%    0  & \quad \text{if } c_1=0 \text{ and } c_2=0
%  \end{cases}
%\end{equation*}

%Let $n$ be the maximum of degree of $x$ and degree of $y$ in $P(N_1,x,y)$ and $P(N_2,x,y)$, we define the polynomial distance in the following way.

%\begin{equation*}
%d_P(N_1,N_2) = \sum_{0\leq i,j \leq n}  \gamma(c_1^{(i,j)},c_2^{(i,j)})
%\end{equation*}

%In \cite{BRYANT2012}, diversities for subsets of a metric space were introduced. Consider the polynomials of spanning trees of a network $N$ $\mathcal{P_N}=\{P(T,x,y) | T\in \T_N\}$. $\mathcal{P_N}$ is a subset of the metric space induced by the polynomial distance. Here, we compute the diameter diversity introduced in \cite{BRYANT2012} as an index for phylogenetic networks.

\subsection{Network generators}

Phylogenetic network generators typically alter tree generators by introducing reticulation event to the process. %For example, it is natural to extend branching processes that generate phylogenetic trees to create networks. 
An important phylogenetic tree generator is the beta-splitting model which is a generalization of the Yule model and the PDA model \cite{aldous1996probability,blum2006random}. 

For our comparison, we select the LGT network generator \cite{pons2019generation} and the ZODS
%\footnote{The authors gave no name to their model, so we named it using the first letters of their last names.} 
network generator \cite{zhang2018bayesian} as representations of network generators based on the Yule model (i.e., beta-splitting model with $\beta=0$). We compare these network generators to a new network generator which is an extension of Heath's tree generator \cite{heath2008taxon}. Another network generator we choose is the NTK network generator \cite{zhang2016tree} which is described not in terms of a branching process, but it utilizes the degree constraints of networks for direct sampling. Moreover, for comparison, we also introduce two additional network generators as references. 
The first is an extension of the beta-splitting tree generator. Instead of adding reticulations in the process, the beta-splitting network generator adds reticulations to generated trees in various ways. The second is the MCMC network sampler which is commonly used in Bayesian methods for sampling posteriors, but we employ it as a standalone generator.

For network generators that are based on a phylogenetic tree generator, we will discuss three types of reticulation structure integrated in the branching process. See Figure \ref{fig1}. For a pair of leaf nodes $x$ and $y$ of the network under generation at the current state, the {\em n-type} reticulation adds an edge from $x$ to a new leaf node $x'$, an edge from $y$ to a new leaf node $y'$ and a horizontal edge from $x$ to $y$ (or $y$ to $x$); the {\em y-type} reticulation glues the leaf node $x$ and $y$ and produces a new leaf node $z$; the {\em m-type} reticulation branches at $x$ and $y$ and glues a pair of new leaf nodes producing three new leaf nodes $x'$, $y'$ and $z$. The n-type reticulation structure has a natural interpretation as an HGT or an hybrid introgression event, and the m-type can be interpreted as a hybrid speciation event.

\begin{figure}[tbh]
\begin{center}
    
    \begin{tikzpicture}[scale=0.8, every node/.style={scale=0.8}]

    \definecolor{clr1}{RGB}{0,114,189}
    \definecolor{clr2}{RGB}{217,83,25}
    \definecolor{clr3}{RGB}{237,177,32}
    \definecolor{clr4}{RGB}{126,47,142}
    \definecolor{clr5}{RGB}{119,172,48}
    
    \node[shape=circle, draw=black, fill = black, scale = 0.5] (X1) at (0,0) {};
    \node[shape=circle, draw=black, fill = black, scale = 0.5] (Y1) at (2,0) {};
    \node[shape=circle, draw=black, fill = black, scale = 0.5] (X2) at (0,-1) {};
    \node[shape=circle, draw=black, fill = black, scale = 0.5] (Y2) at (2,-1) {};
    
    \path[->] (X1) edge (X2);
    \path[->] (Y1) edge (Y2);
    
    \node [rectangle,xshift=0cm] at (-0.3,-1) (return) {$x$};
    \node [rectangle,xshift=0cm] at (2.3,-1) (return) {$y$};
    
    \node [rectangle,xshift=0cm] at (1,-4.25) (return) {leaf nodes};

    %10 11 12
    \node[shape=circle, draw=black, fill = black, scale = 0.5] (X3) at (5,0) {};
    \node[shape=circle, draw=black, fill = black, scale = 0.5] (Y3) at (7,0) {};
    \node[shape=circle, draw=black, fill = black, scale = 0.5] (X4) at (5,-3) {};
    \node[shape=circle, draw=black, fill = black, scale = 0.5] (Y4) at (7,-3) {};
    \node[shape=circle, draw=black, fill = black, scale = 0.5] (Z1) at (5,-1) {};
    \node[shape=circle, draw=clr2, fill = clr2, scale = 0.5] (Z2) at (7,-1) {};
    
    \path[->] (X3) edge (Z1);
    \path[->] (Z1) edge (X4);
    \path[->, draw=clr2] (Y3) edge (Z2);
    \path[->] (Z2) edge (Y4);
    \path[->, draw=clr2] (Z1) edge (Z2);

    \node [rectangle,xshift=0cm] at (4.7,-1) (return) {$x$};
    \node [rectangle,xshift=0cm] at (7.3,-1) (return) {$y$};
    
        \node [rectangle,xshift=0cm] at (4.7,-2.95) (return) {$x'$};
    \node [rectangle,xshift=0cm] at (7.3,-2.95) (return) {$y'$};

    \node [rectangle,xshift=0cm] at (6,-4.25) (return) {n-type};

    %5 6 7
    \node[shape=circle, draw=black, fill = black, scale = 0.5] (X5) at (15,0){};
    \node[shape=circle, draw=black, fill = black, scale = 0.5] (Y5) at (17,0) {};
    \node[shape=circle, draw=black, fill = black, scale = 0.5] (X6) at (15,-3) {};
    \node[shape=circle, draw=black, fill = black, scale = 0.5] (Y6) at (17,-3) {};
    \node[shape=circle, draw=black, fill = black, scale = 0.5] (Z3) at (15,-1) {};
    \node[shape=circle, draw=black, fill = black, scale = 0.5] (Z4) at (17,-1) {};
    \node[shape=circle, draw=clr2, fill = clr2, scale = 0.5] (Z5) at (16,-2) {};
    \node[shape=circle, draw=black, fill = black, scale = 0.5] (Z6) at (16,-3) {};
    
    \path[->] (X5) edge (Z3);
    \path[->] (Z3) edge (X6);
    \path[->] (Y5) edge (Z4);
    \path[->] (Z4) edge (Y6);
    \path[->, draw=clr2] (Z3) edge (Z5);
    \path[->, draw=clr2] (Z4) edge (Z5);
    \path[->] (Z5) edge (Z6);
    
    \node [rectangle,xshift=0cm] at (16,-4.25) (return) {m-type};
    
    \node [rectangle,xshift=0cm] at (14.7,-1) (return) {$x$};
    \node [rectangle,xshift=0cm] at (17.3,-1) (return) {$y$};
    
\node [rectangle,xshift=0cm] at (14.7,-2.95) (return) {$x'$};
    \node [rectangle,xshift=0cm] at (17.3,-2.95) (return) {$y'$};
    \node [rectangle,xshift=0cm] at (15.7,-3) (return) {$z$};
    
    %15 16 17
    
    \node[shape=circle, draw=black, fill = black, scale = 0.5] (X8) at (10,0) {};
    \node[shape=circle, draw=black, fill = black, scale = 0.5] (Y8) at (12,0) {};
    \node[shape=circle, draw=clr2, fill = clr2, scale = 0.025] (X7) at (10,-1) {};
    \node[shape=circle, draw=clr2, fill = clr2, scale = 0.025] (Y7) at (12,-1) {};
    \node[shape=circle, draw=clr2, fill = clr2, scale = 0.5] (Z7) at (11,-2) {};
    \node[shape=circle, draw=black, fill = black, scale = 0.5] (Z8) at (11,-3) {};
    
    \path[-, draw=clr2] (X8) edge (X7);
    \path[-, draw=clr2] (Y8) edge (Y7);
    \path[->, draw=clr2] (X7) edge (Z7);
    \path[->, draw=clr2] (Y7) edge (Z7);
    \path[->] (Z7) edge (Z8);
    
    \node [rectangle,xshift=0cm] at (9.7,-1) (return) {$x$};
    \node [rectangle,xshift=0cm] at (12.3,-1) (return) {$y$};
    \node [rectangle,xshift=0cm] at (10.7,-3) (return) {$z$};

    \node [rectangle,xshift=0cm] at (11,-4.25) (return) {y-type};

\end{tikzpicture}
    
    \caption{Different types of reticulation structures, where reticulation nodes and hybrid edges are in red.}
    \label{fig1}
\end{center}
\end{figure}
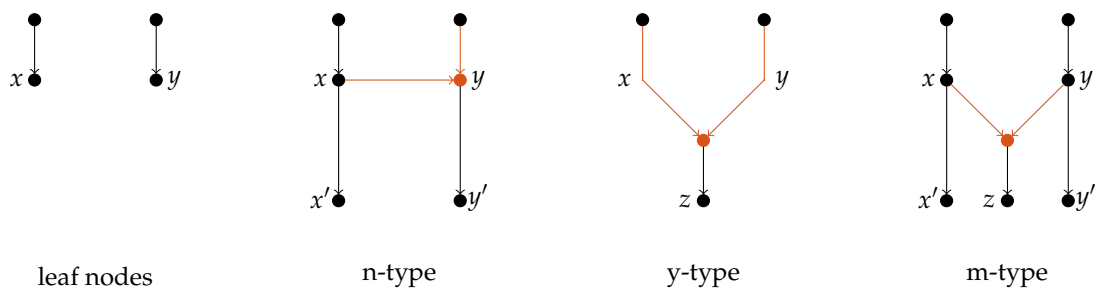

\subsubsection{LGT network generator}

The LGT network generator is a network generator designed to model the lateral gene transfer (LGT) events. The generator is introduced in \cite{pons2019generation}, and, from its description, the speciation events are designed as in the Yule model, that is, a leaf node at the current state is chosen uniformly at random for branching. The LGT network generator uses n-type reticulations and has two parameters governing the introduction of reticulation events: a parameter $\alpha\in[0,1]$ controls the probability of the next event being a reticulation event instead of a speciation event, and a parameter $\gamma\geq 0$ regulates the probability of a reticulation event being within an existing blob, where smaller $\gamma$ corresponds to higher probability of a local reticulation event. Hence, if the number of events is fixed, the parameter $\alpha$ influences the number of reticulations in the network, and the parameter $\gamma$ influences the level of the network. For our comparison, we modified the code of the generator so that we have networks with the same number of reticulations and number of leaf nodes, making the $\alpha$ parameter irrelevant for our comparison.

\subsubsection{ZODS network generator}

The ZODS network generator is a birth-hybridization process introduced in \cite{zhang2018bayesian} as a prior in the Bayesian method. The generator uses y-type reticulations and has a constant speciation rate $\lambda$ and a constant hybridization rate $\nu$, that is, at a step when there are $k$ extant leaf nodes in a network, the speciation rate is $\lambda k$ and hybridization rate $k(k-1)\nu/2$. Besides the reticulation type, another difference between the ZODS network generator and the LGT network generator is that the hybridization rate. The hybridization rate in the ZODS network generator depends quadratically on the number of leaf nodes, whereas in the LGT network generator, the hybridization rate depends linearly on the number of leaf nodes.

\subsubsection{NTK network generator}

A binary phylogenetic network has the degree constraints that a tree node has in-degree one and out-degree two, a leaf node has in-degree one and out-degree zero, and a reticulation node has in-degree two and out-degree one. The NTK network generator introduced in \cite{zhang2016tree} does not rely on a branching process but utilizes these degree constraints and generates networks as follows. To generate a random network of a certain size, the generator selects uniformly at random $n$ nodes as leaf nodes and $r$ nodes as reticulation nodes from $2(n+r)-1$ nodes, and randomly inserting edges between a pair of nodes so that the degree constraints are met.

%The NTK network generator is a network generator introduced in \cite{zhang2016tree} and available as an online applet at \url{http://phylnet.univ-mlv.fr/tools/randomNtkGenerator.php}. As state earlier, a binary network has degree constraints on different types of nodes, that is, an internal node has in-degree one and out-degree two, a leaf node has in-degree one and out-degree zero, and a reticulation node has in-degree two and out-degree one. The description of the applet states that the generator generates a random network with $n$ leaf nodes, $n_r$ reticulation nodes and $2n_r+(n-1)$ internal nodes by selecting $n_r$ internal nodes as the reticulation nodes uniformly at random, and randomly inserting an edge between two nodes so that the degree constraints are met. 

\subsubsection{Beta-splitting network generator}

The beta-splitting network generator is based on the beta-splitting model for trees \cite{aldous1996probability,blum2006random,sainudiin2016beta}. The beta-splitting model for trees is a generalization of the Yule model ($\beta=0$) and the PDA model ($\beta = -1.5$). Here, we introduce three methods to create a network from a generated tree by repeatedly adding edges to the tree. The first method picks two edges uniformly at random, subdivides these two edges with two vertices respectively and adds another new edge between these two vertices with a random direction that does not introduce cycles (Algorithm~\ref{alg:AddEdgesUniform}); the second chooses edges more locally governed by a single parameter $P_{\text{stop}}\in[0,1]$ that stochastically determines how local the cycles are, a high $P_{\text{stop}}$ leads to more local reticulations (Algorithm~\ref{alg:AddEdgeLocal}); the third picks two leaf nodes, subdivides their incoming edges with two vertices and adds a horizontal edge between the two vertices with a random direction (Algorithm~\ref{alg:AddEdgesHorizontal}). We call the process of generating networks based on the PDA trees in these ways the {\em PDA network generator}, and the generated networks the {\em PDA networks}. 
%All these methods use n-type reticulation model. 

\subsubsection{MCMC network sampler}
The Markov-chain-Monte-Carlo (MCMC) network sampler performs a random walk through the space of phylogenetic networks by applying rSPR moves, small changes to the network that move one endpoint of an edge to another edge \cite{gambette2017rearrangement}. By selecting networks after sufficiently many random rSPR moves, the MCMC network sampler is able to sample (almost) uniformly from the set of (internally labelled) networks, as this space is connected \cite{janssen2021rearranging}. The MCMC network sampler lies at the basis of a common method in Bayesian analysis for sampling posteriors, as in PhyloNet \cite{wen2016bayesian} and SpeciesNetwork \cite{zhang2018bayesian}.

%from the set of phylogenetic networks by means of a random walk. Using a symmetric step probability (i.e. $\Pr(N_t=N|N_{t-1}=N')=\Pr(N_t=N'|N_{t-1}=N)$) and sampling after long enough times, the samples are close to following a uniform distribution. Our sampler uses a step probability that is symmetric on internally labelled networks, because otherwise we need to compute the number of automorphisms of the proposed network for each step. This is a common choice, which is also made in 

\subsubsection{Heath network generator}

Finally, we present a new network generator based on the branching process introduced in \cite{heath2008taxon}, which uses auto-correlated speciation and extinction rates. For our analysis, we set the extinction rate to zero. The branching process is extended by adding hybrid speciation events, i.e., we integrate the m-type reticulation in this model. We introduce a new distance dependent hybridization rate $f_{hyb}(d)$ between two leaf nodes at the current state, which depends on the weighted distance $d$ between the leaf nodes, where $d$ is the sum of the lengths of all up-down paths between the two leaves, weighted by the probability of the up-down path, i.e., the product of the edge-probabilities on the path. The function $f_{hyb}(d)$ is piecewise linear, and it is linearly decreasing in $d$ (for $h_l\leq d \leq h_r$) with a minimum and maximum rate $h_{lr}$ and $h_{rr}$.
\[
f_{hyb}(d) = \begin{cases} h_{lr}                                           &\mbox{if } d \leq h_l \\
                           h_{lr}+\frac{(h_{rr}-h_{lr})(d-h_l)}{h_r-h_l}    &\mbox{if } h_l<d<h_r  \\
                           h_{rr}                                           &\mbox{if } d \geq h_r \end{cases} .\]

See \ref{sec:HeathUpdateDistance} for more details of the generator, and for an efficient method to update the weighted distance $d$ after several types of evolutionary events. Using these, our distance dependent extension is also viable for other tree generators than the Heath tree generator.

\subsection{Data}

To compare the network generators, we generate a large number of networks with each of them, varying the parameters if there are any. To keep the comparison straightforward, we generate networks with 100 leaf nodes for this purpose, and we vary the number of reticulations. When using the polynomial to compare networks, we use smaller networks with 10 leaves. %Phylogenetic networks that are reconstructed often have fewer than 100 leaves, whereas phylogenetic trees typically have more than 100 leaves in current research.

\paragraph{LGT networks}

For the LGT network generator, we use three different parameter values ($0.01$, $0.1$, and $1.0$) for the parameter $\gamma$ that regulated the locality of the reticulations. The parameter $\alpha$ becomes irrelevant as we fix the number of reticulation nodes and leaf nodes as mentioned before. For each of these parameter values, we generate 500 networks with $r=1$, 2, 3, 4, 5, 10, 20, 50, 100, 200, 500 reticulations. This results in a set of $16500$ networks.

\paragraph{ZODS networks}

Setting the number of leaf nodes and reticulation nodes, the parameters become irrelevant as we have argued before. Hence, we generate networks with a set number of reticulations using a reticulation rate that gives a large probability to generate such a network. We used the parameters $\nu=0.0005,0.001,0.002$ to create 500 networks for each reticulation number $r=1,\ldots,5$; we used $\nu=0.005$ to create 500 networks with 10 and 20 reticulations, $\nu=0.01$ for 500 networks with 20 and 50 reticulations, $\nu=0.015$ for 500 networks with 50 and 100 reticulations, and, finally, $\nu=0.02$ for 500 networks with 100 and 200 reticulations. Hence, the full set of networks totalled $23\cdot 500 = 11500$ networks.

\paragraph{NTK networks}
For $r=1,2,3,4,5,10,20,50,100,200$ reticulations, we downloaded 500 networks with 100 leaf nodes from the webpage applet. This generator has no further parameters, so we have a total of $5000$ networks for this generator.

\paragraph{Beta-splitting networks}
We used the beta values $0.0$, $-0.5$, $-1.0$, and $-1.5$. For each of these values, we created 50 trees and added $r=1,2,3,4,5,10,20,50,100,200,500$ reticulations to these trees to create 10 networks with each of the following methods: Uniformly %(Algorithm~\ref{alg:AddEdgesUniform})
, Horizontally %(Algorithm~\ref{alg:AddEdgesHorizontal})
, and locally %(Algorithm~\ref{alg:AddEdgeLocal}) 
with stopping probability $P_{\text{stop}}\in\{0.2, 0.02, 0.002\}$. This results in a total of $4\cdot 50\cdot 11 \cdot 10\cdot 5 = 110000$ networks.

\paragraph{MCMC sampled networks}
We sample networks with $r=1,2,3,4,5,10,20,50,100,200,500$ reticulations, starting with a network sampled by the beta-splitting network generator with $\beta=-1.0,-1.5$ adding edges uniformly. %\footnote{Using both betas isn't necessary, as the starting network does not affect the results provided the number of rearrangement moves between each sample is large enough.} 
For each of these parameter combinations, we have sampled $500$ networks. This results in a set of $1000$ networks for each reticulation number, and $11000$ networks in total.

Each sample is taken after $5000$ proposals. Because about half of the proposals were rejected as invalid rSPR moves, we took about $2500$ steps through the space of networks between each sample. As the diameter of the space of networks with $n$ leaf nodes and $r$ reticulation nodes is at most $2n+3r-2$; see \cite{janssen2018exploring}. This is reasonable to make the samples independent for $n=100$ and $r\leq 500$.%\footnote{The diameter of the space of internally labelled networks is at most $6n+11k-6$ \cite{janssen2021rearranging}, so sampling here should be reasonable as well up to at least 200 reticulations.}

\paragraph{Heath networks}
%For Heath networks, we have three choices for handling extinctions: we can allow for extinction and keep the extinct leaves; we can allow for extinction but use only extant leaves (removing the extinct lineages); or we cannot allow for extinction at all. 
For each of these options, we create $500$ networks where we add $r=1$, 2, 3, 4, 5, 10, 15, 20, 30, 40, 50 reticulations locally, and similarly for less local reticulations. This results in $3\cdot 500 \cdot 11 \cdot 2 = 33000$ networks. For parameter sets used, see Table~\ref{tab:HeathParameters}. 

Note that we do not generate networks with 100, 200, and 500 reticulations with this generator. This is because the Heath network generator uses m-type reticulation events, which add a leaf to the network. Hence, to generate a network with $r$ reticulations, the network must contain at least $r+2$ leaves, so it is impossible to generate networks with 100 leaves and 100 reticulations.

\section{Results}
\label{S3}

\subsection{Topological profiles}

\subsubsection{Balance}
%As we have mentioned, most existing network generators are based on Yule-Harding tree generators. Hence, if the networks have very few reticulations, we expect the balance of all networks produced by these generators to be similar. However, it is not clear how the balance will be affected by adding a large number of reticulations. 

%We compute the $B_2$ balance of the generated networks with different number of reticulations. 

%Figure~\ref{fig:Balance} shows the results that the increasing number of reticulations won't affect the $B_2$ balance of networks by much until the number of reticulations exceeds the number of leaf nodes.

It is displayed in Figure~\ref{fig:Balance} that the balance of the generated networks depends primarily on the balance of the underlying tree model. For the generators based on the Yule model, that is, the LGT network generator and the ZODS network generator, the $B_2$ balance of networks with few reticulations are similar to the beta-splitting networks at $\beta=0$ (Yule model). 

%The networks generated with the Heath model have slightly lower balance to start with, which becomes much lower when extinct species are kept in the network. In that case, the balance with networks with few reticulations is similar to networks generated by the MCMC sampler.%by the (almost) uniform sampling 

%Here, we show that this effect is quite small, unless a large number of reticulations is added, so the balance of the networks is essentially determined by the balance of the trees a generator produces.

Similarly, for beta-splitting networks, the values of the parameter $\beta$ determining the balance of the underlying trees also determine the $B_2$ balance of the network when the number of reticulations is small. As the number of reticulations increases, the balance differs with respect to the method of adding reticulations. Adding reticulations horizontally at leaf nodes makes the network more balanced, while adding reticulations locally leads to less balanced networks. %(almost) uniformly at random 

Regarding the MCMC network sampler, networks with more reticulations are more balanced on average than networks with fewer reticulations. This is because, in the MCMC network sampler, the networks with few reticulations are relatively unbalanced compared to the other network generators, as unbalanced networks are over-represented among all distinguishable arrangements. %This imbalance most likely results from a greater number of imbalanced networks than balanced networks among all networks with few reticulations. 
In fact, when samples are taken after sufficiently many steps, the MCMC network sampler theoretically behaves exactly the same as the PDA model (beta-splitting model with $\beta=-1.5$) on trees, which is known to be relatively unbalanced.
%Hence, varying the parameter beta will result in different balances for the generated networks with a small number of reticulations. 
%When the number of reticulations becomes larger than the number of leaf nodes, the difference between the balances for three values of beta decreases, and the different methods of adding reticulations result in different balance. Indeed, the balance seems to converge to a value that depends on the method of adding reticulation. Adding reticulations locally results in lower balance, and adding edges horizontally leads to a higher balance, and The latter is the result of symmetry in the adding method, for which we give a mathematical argument in Appendix~\ref{sec:BalanceHorizontal}.

%For networks with 100 leaves, the balance may change significantly when a large number of reticulations is added. 

%although this also depends on how the edges are added to the tree. 

%(see also Figure~\ref{fig:BetaSplittingBalance}).  The balance limits for other way of adding reticulations escape mathematical analysis for now, and it would be interesting to see whether such limits can be calculated.

%Again, it seems that adding reticulations locally leads to lower balance, particularly in the Heath model and the LGT generator (Figure~\ref{fig:Balance}), although these differences are not significant in our data. 

%Moreover, adding edges horizontally leads to a higher $B_2$ balance, that is more balanced networks, when the number of reticulations is greater than the number of nodes. See the LGT generator in Figure~\ref{fig:Balance}.

\begin{figure}[h!]
    \centering
    \includegraphics[scale = 0.35]{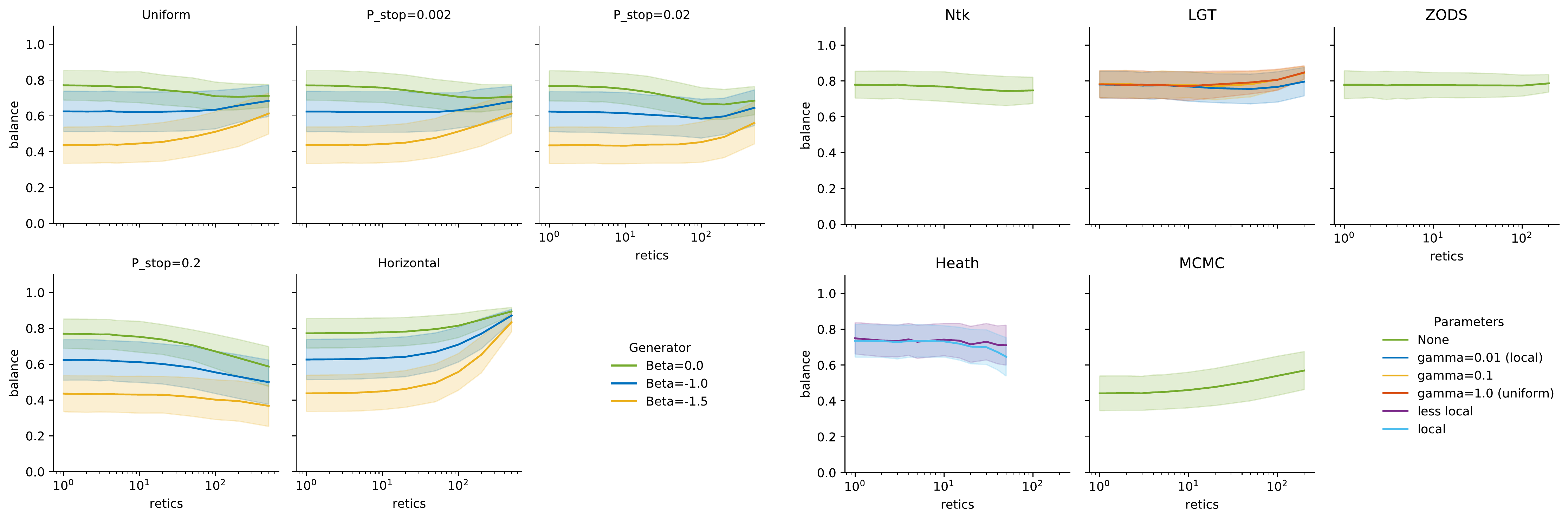}
\caption{Dependence of the $B_2$ balance on the reticulation number in network generators.}
    \label{fig:Balance}
\end{figure}

%%%%%%%%%%%%%%%%%%\subsection{Balance}
%\begin{figure}[h!]
    %\centering
    
    %\includegraphics[width=\textwidth]{BetaSplitting_balance_for_retics_lines.pdf}
    % \includegraphics[width=.45\textwidth]{{beta_0.0_balance_for_retics}.png}
    % \includegraphics[width=.45\textwidth]{{beta_0.0_balance_for_retics_points}.png}
    % \includegraphics[width=.45\textwidth]{{beta_-0.5_balance_for_retics}.png}
    % \includegraphics[width=.45\textwidth]{{beta_-0.5_balance_for_retics_points}.png}
    % \includegraphics[width=.45\textwidth]{{beta_-1.0_balance_for_retics}.png}
    % \includegraphics[width=.45\textwidth]{{beta_-1.0_balance_for_retics_points}.png}
    % \includegraphics[width=.45\textwidth]{{beta_-1.5_balance_for_retics}.png}
    % \includegraphics[width=.45\textwidth]{{beta_-1.5_balance_for_retics_points}.png}
%\caption{Dependence of the balance on the reticulation number in the beta-splitting model, using different betas and edge adding methods to create networks with $100$ leaves. For uniform edge-adding (or $0.002$, which is nearly uniform), the balance seems to converge to around $0.7$. Interestingly, when setting the stopping parameter to $0.2$, the balance of the networks decreases. This may be because adding triangles can increase the imbalance of a tree, as more (i.e., $2$) paths lead to the bottom of the triangle than split off of the side ($1$ path).}
   % \label{fig:BetaSplittingBalance}
%\end{figure}

\subsubsection{Cherries and reticulated cherries}

The relations between the number of cherries, the number of reticulated cherries and the number of reticulations for different network generators are displayed in Figure~\ref{fig:CherriesYule}. For all network generators examined, the number of cherries decreases as the number of reticulations increases. This is because when adding hybrid edges, existing cherries can be removed while new cherries can never be created. It is clear that for network generators based on a branching processes, the initial number of cherries depends on the underlying tree shapes. In particular, the network generators based on the Yule model, namely the LGT and ZODS network generators, have similar initial number of cherries to the beta-splitting network with $\beta=0$ (Yule model). It can also be observed that the initial number of cherries of the NTK network generator is similar to the beta-splitting network with $\beta=0$ as well, and the number of cherries of the MCMC network sample resembles the PDA networks (the beta-splitting network with $\beta=0$).

For the LGT, ZODS, and Heath network generators, the number of reticulated cherries increases with the number of reticulations. The rate with which the number of reticulated cherries increases depends on the reticulation structure used in the process. Specifically, the ZODS network generator uses y-type reticulation structure, and the number of reticulated cherries increases the slowest;
The LGT network generator uses n-type reticulation structure, and the number of reticulated cherries increases faster;
the Heath network generator uses m-type reticulation structure, and the number of reticulated cherries increases the fastest. Also note that adding reticulations locally makes the number of reticulated cherries go up faster in the LGT network generator.

\begin{figure}
    \centering
    \includegraphics[scale = 0.35]{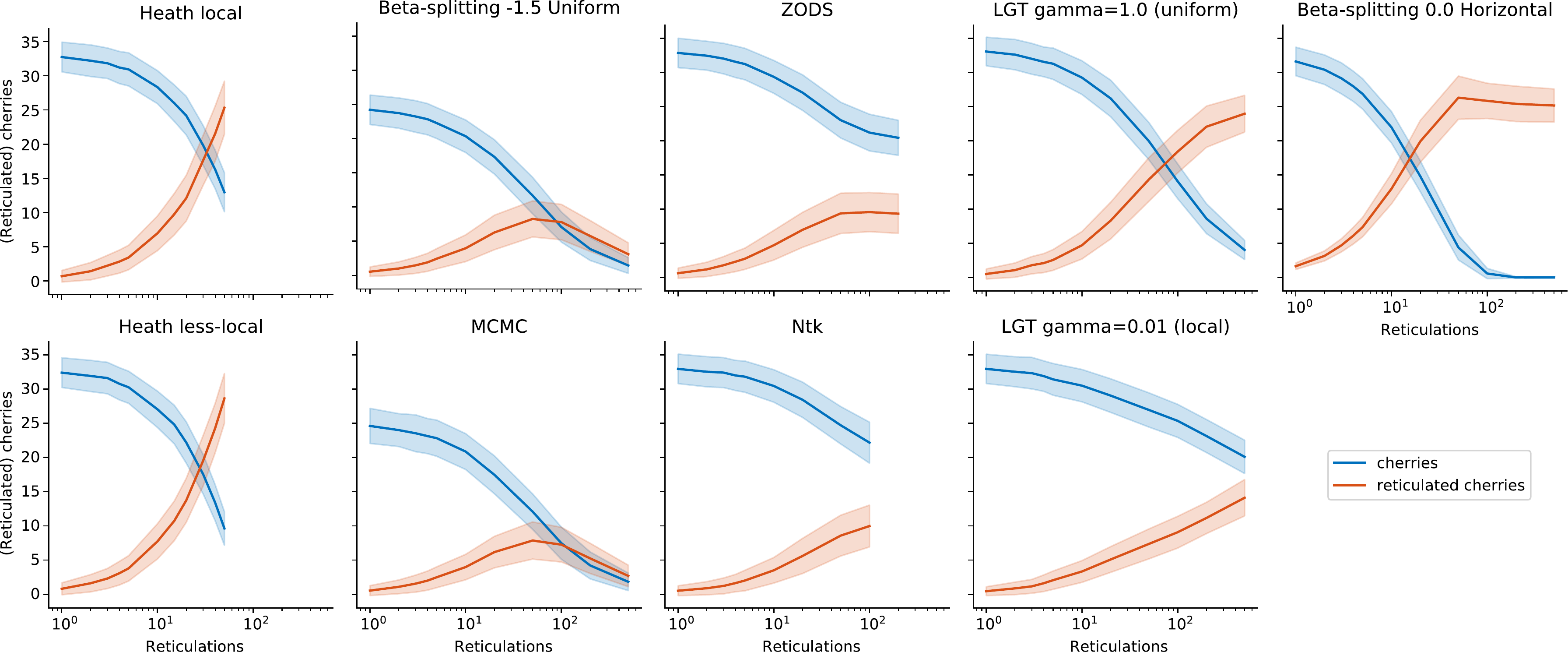}
\caption{Dependence of the number of cherries and reticulated cherries on the number of reticulations for the network generators.}
    \label{fig:CherriesYule}
\end{figure}

\subsubsection{Blobs}

We examine the relations between the number of blobs, the level, the blob size of networks and the number of reticulations in the networks generated by different generators. Figure~\ref{fig:Blobs} displays the results. For all the generators examined, The blob size increases slower than the number of reticulations, but eventually the ratio between the blob size and the number of reticulations becomes constant as the number of blobs in the networks becomes one. More precisely, in a network with $n$ leaf nodes and $r$ reticulations, as $r$ increases and the number of blobs in the networks becomes one, the number of nodes in the largest blobs converges to $n+2r\approx 2r$ and the ratio between the level of the network and the number of reticulations $r$ converges to $1$. 

Trivially, adding edges locally results in more and smaller blobs. The type of locality does affect the profile for the number of blobs subtly as well. The LGT network generator explicitly forces local reticulation events to stay within a blob. Hence, blobs cannot merge in a local event and networks often have multiple blobs even when they have a large number of reticulations. The beta-splitting network generator and the Heath network generator enforce locality differently, where blobs are allowed to merge, which nearly always leads to networks with one blob when a large number of reticulations are added.

Despite the qualitative similarity in the profiles, the Heath network generator with m-type reticulations has distinct profiles with a more significant change in the number of blobs. Similarly, the beta-splitting network generator with different parameter $\beta$ also generates qualitatively similar profiles but with differences in the quantities; see Figure \ref{fig:BlobsBetaSplitByBeta}

\begin{figure}
    \centering
    \includegraphics[scale = 0.35]{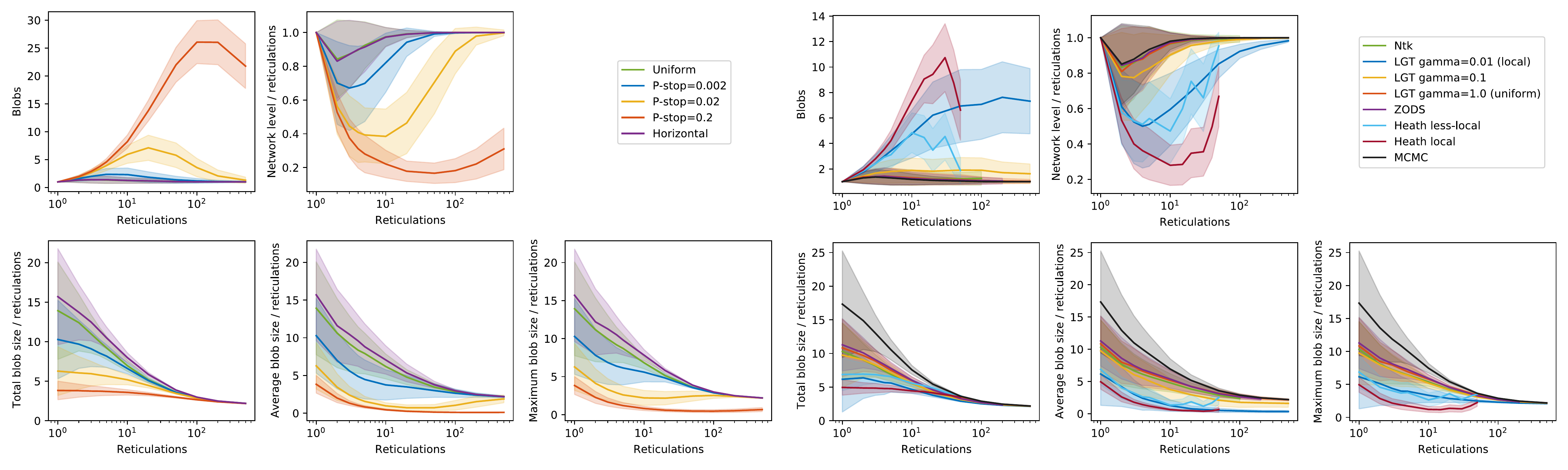}
    \caption{Dependence of the number, size and level of blobs on the reticulation number for network generators.}
    \label{fig:Blobs}
\end{figure}

\subsubsection{Network classes}

Lastly, We study the proportions of generated networks that belong to each of the network classes. Figure \ref{fig:Classes} displays the proportions with respect to the number of reticulations. As the number of reticulations increases, most of the generated networks fall out of any of the four classes. 

%In particular, when the number of reticulations exceeds the number of leaf nodes (100 in the data set), no generated networks are in any of the classes in most of the cases. 

%By computing, for each reticulation number and each generator how many of the generated networks were within certain network classes, we gauge the probability of a generator producing a network in a certain class (Figure~\ref{fig:Classes}).\todo{Find paper about fraction of phylogenetic networks that are tree-child}

For the beta-splitting networks, the beta parameter had negligible influence, so we combined data from all values of betas; see the left panel in Figure~\ref{fig:Classes}. The curves for all different methods of adding reticulations are very similar. One exception is that adding edges horizontally always leads to networks that are orchard, and thus also tree-based. In Lemma~\ref{lem:OnlyHGTOrchard}, we show that it is in fact impossible for this method to produce non-orchard networks. 

For the other network generators, the general results are similar. The exceptions are LGT networks and Heath networks. The LGT network generator, like the beta-splitting model that adds edges horizontally, only produces orchard networks. The Heath network generator is the most restricted, because these only produce tree-child networks as a result of its m-type reticulations.

\begin{figure}[h!]
    \centering
    \includegraphics[scale = 0.42]{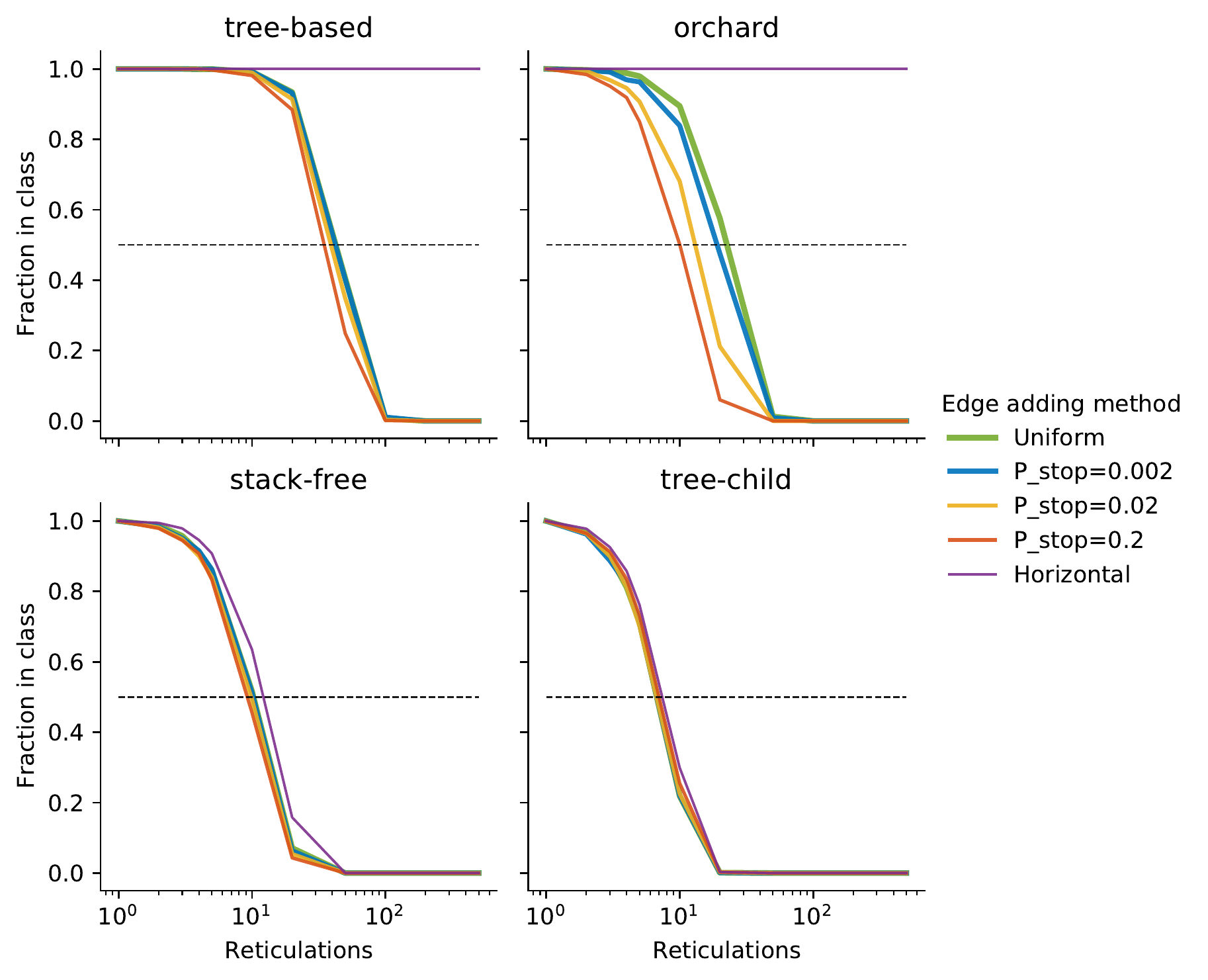}
    \includegraphics[scale = 0.42]{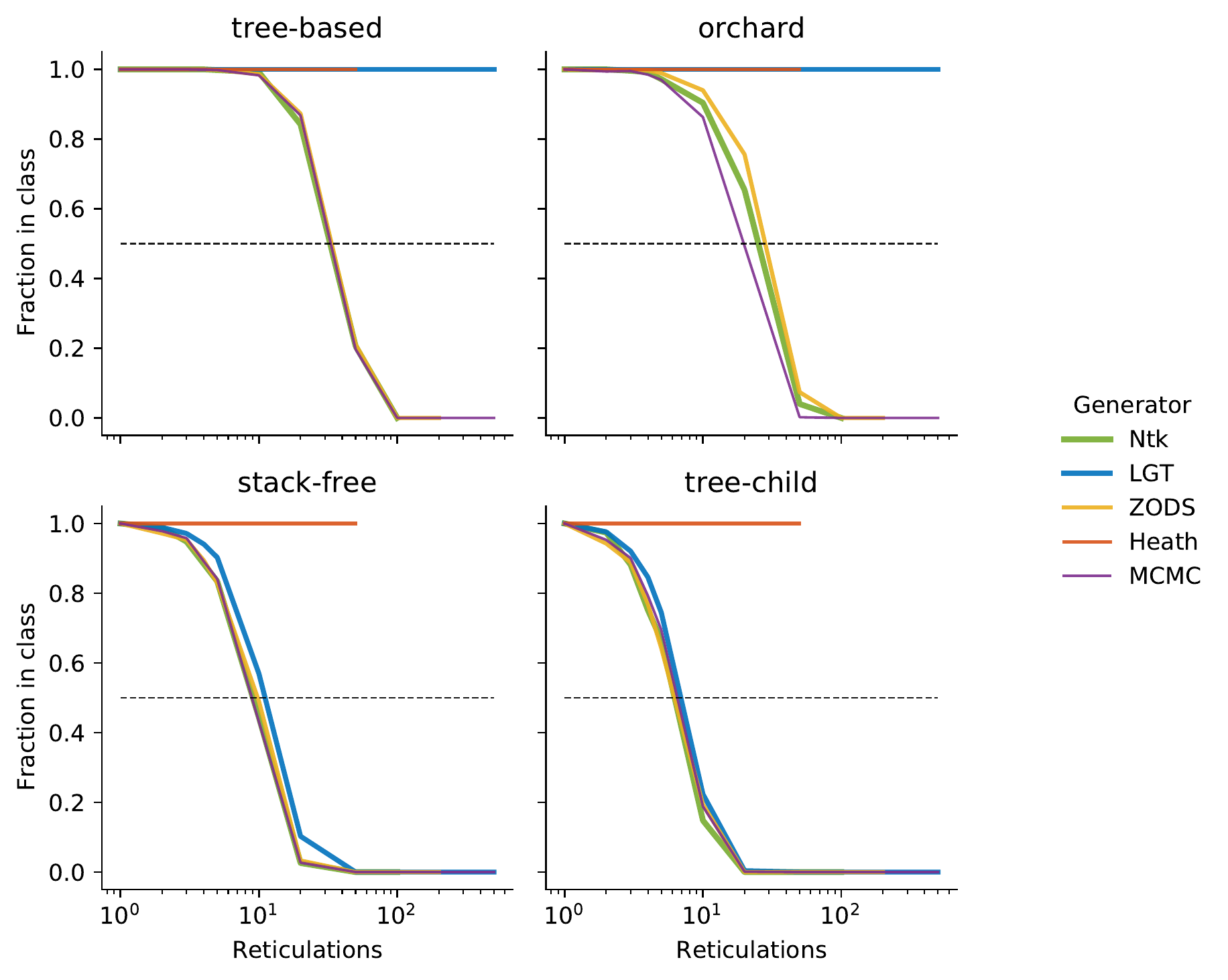}
    \caption{Proportions of generated networks in different network classes for different network generators.}
    \label{fig:Classes}
\end{figure}

%Again, not all generators are able to generate any network. 
%The Heath generator without extinction and the Heath generator where extinct taxa are kept in the network are the most restricted, because these only produce tree-child networks.
%The LGT generator, like the beta-splitting model that adds edges horizontally, only produces orchard networks, 

%and the other generators (the MCMC generator, the Zhang Ntk generator, the ZODS birth-hybridization model and the Heath model where extinct taxa are removed) seem to be able to produce any type of network. In Appendix~\ref{sec:GeneratorClassesProofs}, we prove that this is true in the sense that these models---except for the Zhang Ntk, of which we don't know the model---have non-zero probability to produce any arbitrary network. 

\subsection{Polynomial comparison of small networks}

We deploy the polynomial as an alternative comparison between generated networks. We compare the network generators with similar profiles, namely the ZODS network generator versus the NTK network generator and the MCMC network sampler versus PDA network generator (with reticulations added uniformly). As a reference, we also compare the LGT network generator (with reticulations added uniformly, i.e. $\gamma=1.0$) vs the NTK network generator. Note that other than the network classes, the LGT network generator and the NTK network generator have similar topological profiles. Figure~\ref{figpoly} displays the misclassification rates from k-medoids clustering sets of networks, the experiment is performed on networks with 10 leaves, and repeated 50 times for each number of reticulations. A misclassification rate of $0.5$ means that the two sets of generated networks cannot be distinguished, while a misclassification rate of $0$ means the two sets of generated networks can be completely distinguished.
When the number of reticulations in the networks reaches 10, the polynomial distance can distinguish the LGT networks and NTK networks. However, the polynomial distance cannot successfully distinguish the ZODS networks from the NTK netowrks or MCMC networks from the PDA networks.

%We compute the polynomial for all spanning trees of a network. The average of the polynomials of a network can detected the underlying tree model for the network. The diversity of polynomials can be used to distinguish whether reticulations are added more locally or globally. In Figure \ref{figpoly} A, we use a multi-dimensional scaling plot to visualize the polynomial distances between average polynomials of different beta-splitting networks with 100 tips and 4 reticulations. The plot shows that the polynomial distance on average polynomial of networks can detect the underlying tree structure. In Figure \ref{figpoly} B, we compute the diversity of the polynomials for networks, and it reveals that the mean diameters of polynomials for networks with locally added reticulations are the minimum. The mean diameter increases as the reticulations added become less local. 

\begin{figure}
    \centering
    \includegraphics[scale = 0.275]{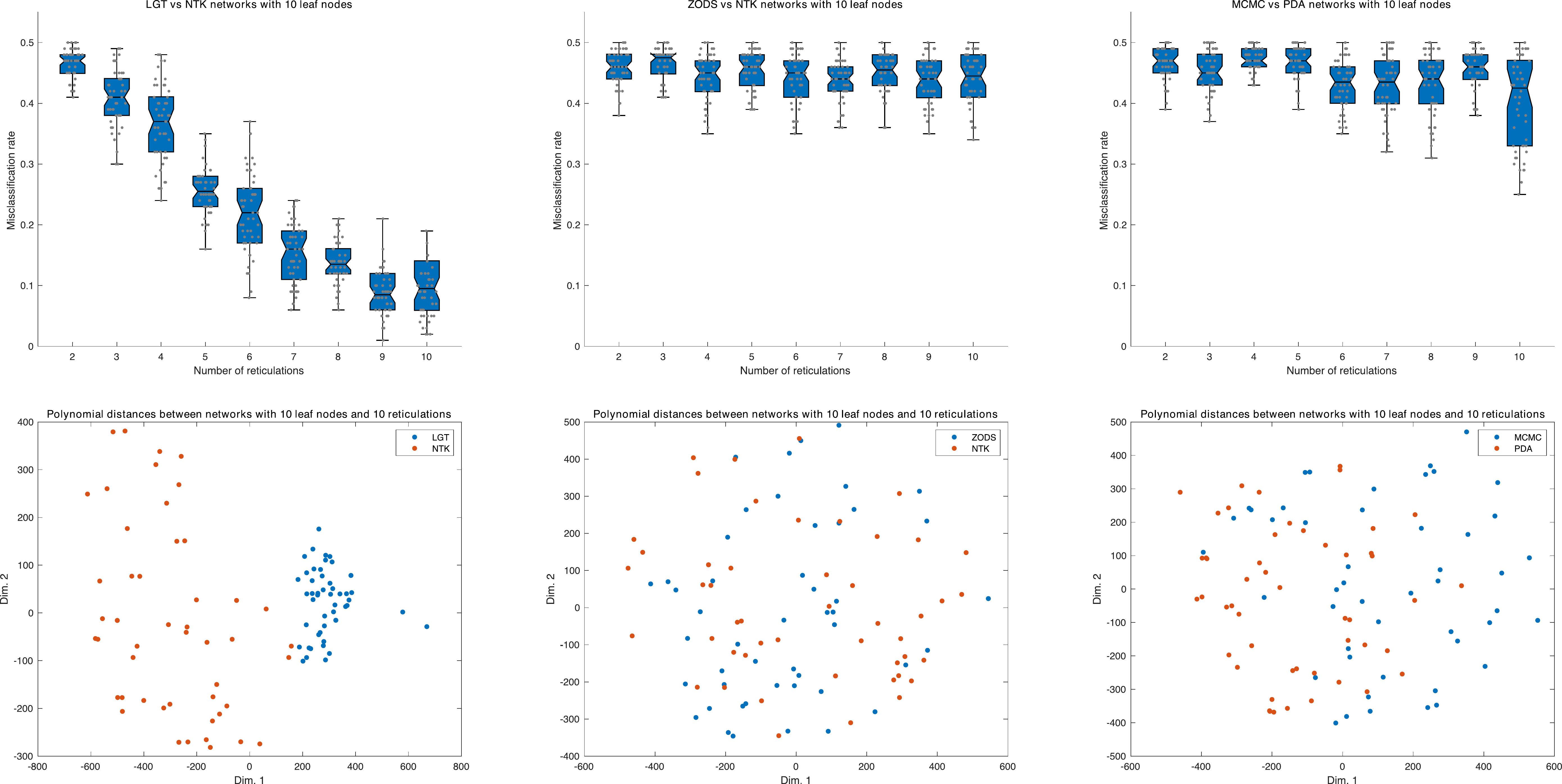}
\caption{Top: the misclassification rates in clustering LGT vs NTK, ZODS vs NTK and MCMC vs PDA networks with 10 leaf nodes by k-medoids. Bottom: the multi-dimensional scaling visualization of polynomial distances between full polynomials of sets of LGT vs NTK, ZODS vs NTK and MCMC vs PDA networks networks with 10 leaf nodes and 10 reticulations.}
    \label{figpoly}
\end{figure}

%For smaller networks, it is feasible to use the full polynomial for comparison. In Figure \ref{figpoly} C, we visualize the polynomial distance between full polynomials of networks with 10 tips and 10 reticulations generated by three very similar models (ZODS, LGT and NTK). It shows that the networks generated by the LGT model can be distinguished from the networks generated by the other two models. We also use k-medoids clustering to classify LGT networks and ZODS or NTK networks. We compute the classification rates for networks with 10 tips and 2 to 10 reticulations. As the number of reticulations increase, the LGT networks can be better distinguished from networks generated by the other models.

\section{Discussion}
\label{S4}

We have compared the topological profiles of several network generators. In general, the topological characteristics are related to the underlying tree shapes when the number of reticulations is small. For generators based on a branching process, as the number of reticulations increases, the topological characteristics change with respect to the type of integrated reticulation structures and the locality of added reticulations.
%For generators based on a branching process, the profiles are mainly determined by the type of reticulation. 
For example, y-type reticulations (ZODS network generator) can be used to create any network in any network class, and they lead to a relatively large number of cherries which seems to persist even for very large numbers of reticulations. This large number of cherries can be explained by the relatively large number of speciation events needed to generate networks with larger numbers of reticulations. Using n-type reticulations (LGT network generator) always leads to orchard networks, and a relatively high balance compared to the other generators. While using m-type reticulations (Heath network generator) always leads to tree-child networks, and to a sharp increase in the number of reticulated cherries, because each m-type reticulation involves two reticulated cherries.

For generators that are not based on a branching process, we have found resemblances between the NTK network generator and the ZODS network generator, as well as between MCMC network sampler and the PDA network generator (with reticulations added uniformly). The former resemblance could be explained by seeing the ZODS network generator as having a predefined set of nodes with degree constraints, which are connected in a top-down fashion. This makes it quite similar to the NTK network generator, but whose description does not explicitly include a top-down order for connecting the nodes. The resemblance between the MCMC network sampler and the PDA network generator is less obvious, as there only a clear relation between these methods when there are no reticulations. It remains an open problem whether the theoretical equivalence for these models on trees can be extended formally to an equivalence on networks.

These resemblances have been further analyzed by comparing small networks generated by the set of generators using the polynomial distance. We found that among the pairs of network generators with similar topological profiles, the polynomial distance can distinguish the LGT networks from the NTK networks, but cannot successfully distinguish ZODS networks from the NTK networks or MCMC networks from PDA networks. We limited our analysis to small networks due to the computation complexity of the polynomial. For larger phylogenetic networks, we can compute the average of coefficients over the set of spanning-tree polynomials to avoid too many multiplications of polynomials. Consider the polynomial coefficients as vectors, we can also compute the diversity \cite{BRYANT2012} of the set of spanning-tree polynomials of a network. This would provide additional information about the topology of a network. 

Perhaps surprisingly, we observed that adding reticulations locally does not only affect the number of blobs and their sizes, but also affects the balance of the networks. This effect is more prominent in the beta-splitting and Heath network generators than in the LGT network generator. It would be interesting to see whether it can be explained mathematically why local reticulations lead to networks with lower balance, and why this effect is not as strong as in the LGT network generator.

%We have further compared the . 

%As another application, the polynomial distance can be utilized to determine which generator produces random networks that resemble the phylogenetic networks reconstructed from data, which is a common question in modelling the evolution of species. 

As we mentioned in the introduction, there are two main applications for network generators in phylogeny reconstruction: as a tool to validate network reconstruction methods, and as prior for Bayesian methods. The profiles and the analysis of the network topology presented in this paper can help select suitable generators in these tasks. %To use a generator to create a test set for validation, 
It is an important question in validation which generator produces realistic networks that resemble networks reconstructed from data. This question is discussed for phylogenetic trees \cite{blum2006random}, but is not investigated for phylogenetic networks. One of the gaps in this investigation is the lack of realistic networks reconstructed from data. %though progress could be made in obtaining reticulated phylogenetic data from observations or experiments like the ongoing long-term evolutionary experiments \cite{lenski1991long}, or from artificial evolution studies \cite[e.g.,][]{cuypers2015endless,van2016evolution,van2020slightly}. 
Once sufficient realistic networks become available, for example through reconstructions \cite{morando2020phylogenomic,yang2021extensive} or through simulated evolution \cite{cuypers2015endless,van2016evolution,van2020slightly}, the topological profiles and the polynomial can be utilized to compare the natural evolutionary processes with simulated evolution models or the random network generators, and networks reconstructed from data with the generated random networks. 

%Hence, it should be investigated what realistic networks look like, or what statistical properties they have. Here, we have compared several network generators by such properties, but, because we do not know the properties of realistic networks, we cannot determine which profile, and thus which generator, is the most realistic. This will most likely involve the collection of a large number of realistic networks, which may possibly be obtained 

To provide a more flexible framework for adding hybridization events to tree generators, we also developed a new network generator based on the Heath branching process. This network generator uses the m-type reticulation structure, and hybrid events are more likely for closely related species, which is modelled by a distance that we update efficiently after each event. Consequently, the Heath network generator has different topological profiles from the other network generators. For all these network generators based on a branching process, we did not include extinction events, which would change the topological profiles; see \ref{sec:HeathExt}. Moreover, extinction would allow network generators using m-type reticulations to have more reticulations than leaves, and network generators with m-type and n-type reticulations to generate networks outside the class of tree-based networks. 
%We also fixed the number of leaf nodes of the networks for our comparison. For this purpose, we have modified some generators so that the same size of leaf nodes makes the comparison more reasonable, though this may change the distribution of generated networks. 
 
In the perspective of Bayesian methods, our results suggest that the priors introduced in \cite{wen2016bayesian} and \cite{zhang2018bayesian} can behave quite differently in terms of balance and number of (reticulated) cherries, for in the former an explicit prior is absent so the actual prior is the MCMC network sampler, and in the latter, the prior is the ZODS network generator. Moreover, if a generator that is unable to generate all networks (e.g. the LGT network generator) is used as prior, then there exist networks that cannot be traversed by the random walk in sampling posterior distribution. Using such a prior thus requires new investigations on the connectedness of the spaces of networks that can be generated by such a generator, like for tree-child networks in \cite{bordewich2017lost,klawitter2020spaces}.

\section*{Acknowledgments}

R.J. was supported by the Netherlands Organization for Scientific Research (NWO) Vidi grant 639.072.602. P.L. was supported by the grant of the Federal Government of Canada's Canada 150 Research Chair program to Prof.~C. Colijn.

\bibliographystyle{abbrvnat}
\bibliography{bibliography.bib}

\begin{thebibliography}{55}
\providecommand{\natexlab}[1]{#1}
\providecommand{\url}[1]{\texttt{#1}}
\expandafter\ifx\csname urlstyle\endcsname\relax
  \providecommand{\doi}[1]{doi: #1}\else
  \providecommand{\doi}{doi: \begingroup \urlstyle{rm}\Url}\fi

\bibitem[Aldous(1996)]{aldous1996probability}
D.~Aldous.
\newblock Probability distributions on cladograms.
\newblock In \emph{Random discrete structures}, pages 1--18. Springer, 1996.

\bibitem[Baroni et~al.(2005)Baroni, Gr{\"u}newald, Moulton, and
  Semple]{baroni2005bounding}
M.~Baroni, S.~Gr{\"u}newald, V.~Moulton, and C.~Semple.
\newblock Bounding the number of hybridisation events for a consistent
  evolutionary history.
\newblock \emph{Journal of mathematical biology}, 51\penalty0 (2):\penalty0
  171--182, 2005.

\bibitem[Bienvenu et~al.(2020)Bienvenu, Cardona, and
  Scornavacca]{bienvenu2020revisiting}
F.~Bienvenu, G.~Cardona, and C.~Scornavacca.
\newblock Revisiting shao and sokal's $ b\_2 $ index of phylogenetic balance.
\newblock \emph{arXiv preprint arXiv:2010.08079}, 2020.

\bibitem[Blair and An{\'e}(2020)]{blair2020phylogenetic}
C.~Blair and C.~An{\'e}.
\newblock Phylogenetic trees and networks can serve as powerful and
  complementary approaches for analysis of genomic data.
\newblock \emph{Systematic biology}, 69\penalty0 (3):\penalty0 593--601, 2020.

\bibitem[Blais and Archibald(2021)]{blais2021past}
C.~Blais and J.~M. Archibald.
\newblock The past, present and future of the tree of life.
\newblock \emph{Current Biology}, 31\penalty0 (7):\penalty0 R314--R321, 2021.

\bibitem[Blum and Fran{\c{c}}ois(2006)]{blum2006random}
M.~G. Blum and O.~Fran{\c{c}}ois.
\newblock Which random processes describe the tree of life? a large-scale study
  of phylogenetic tree imbalance.
\newblock \emph{Systematic Biology}, 55\penalty0 (4):\penalty0 685--691, 2006.

\bibitem[Bordewich and Semple(2016)]{bordewich2016determining}
M.~Bordewich and C.~Semple.
\newblock Determining phylogenetic networks from inter-taxa distances.
\newblock \emph{Journal of mathematical biology}, 73\penalty0 (2):\penalty0
  283--303, 2016.

\bibitem[Bordewich et~al.(2017)Bordewich, Linz, and Semple]{bordewich2017lost}
M.~Bordewich, S.~Linz, and C.~Semple.
\newblock Lost in space? generalising subtree prune and regraft to spaces of
  phylogenetic networks.
\newblock \emph{Journal of theoretical biology}, 423:\penalty0 1--12, 2017.

\bibitem[Bordewich et~al.(2018)Bordewich, Huber, Moulton, and
  Semple]{bordewich2018recovering}
M.~Bordewich, K.~T. Huber, V.~Moulton, and C.~Semple.
\newblock Recovering normal networks from shortest inter-taxa distance
  information.
\newblock \emph{Journal of mathematical biology}, 77\penalty0 (3):\penalty0
  571--594, 2018.

\bibitem[Bryant and Tupper(2012)]{BRYANT2012}
D.~Bryant and P.~F. Tupper.
\newblock Hyperconvexity and tight-span theory for diversities.
\newblock \emph{Advances in Mathematics}, 231\penalty0 (6):\penalty0
  3172--3198, 2012.

\bibitem[Cardona et~al.(2008)Cardona, Rossell{\'o}, and
  Valiente]{cardona2008comparison}
G.~Cardona, F.~Rossell{\'o}, and G.~Valiente.
\newblock Comparison of tree-child phylogenetic networks.
\newblock \emph{IEEE/ACM Transactions on Computational Biology and
  Bioinformatics}, 6\penalty0 (4):\penalty0 552--569, 2008.

\bibitem[Choi et~al.(2020)Choi, Thompson, and Wu]{choi2020cherry}
K.~P. Choi, A.~Thompson, and T.~Wu.
\newblock On cherry and pitchfork distributions of random rooted and unrooted
  phylogenetic trees.
\newblock \emph{Theoretical population biology}, 132:\penalty0 92--104, 2020.

\bibitem[Cox and Cox(2001)]{mds}
M.~Cox and M.~Cox.
\newblock \emph{Multidimensional Scaling}.
\newblock Chapman and Hall, 2001.

\bibitem[Cuypers and Hogeweg(2015)]{cuypers2015endless}
T.~Cuypers and P.~Hogeweg.
\newblock Endless evolutionary paths to virtual microbes.
\newblock In \emph{Workshop, First EvoEvo Workshop, satellite workshop of
  ECAL2015}, 2015.

\bibitem[{\VAN{Dijk}{van}{van}}~Dijk et~al.(2016){\VAN{Dijk}{van}{van}}~Dijk,
  Cuypers, and Hogeweg]{van2016evolution}
B.~{\VAN{Dijk}{van}{van}}~Dijk, T.~Cuypers, and P.~Hogeweg.
\newblock Evolution of r-and k-selected species of virtual microbes: a case
  study in a simple fluctuating 2-resource environment.
\newblock In \emph{2nd EvoEvo Workshop, satellite workshop of CCS2016}, 2016.

\bibitem[{\VAN{Dijk}{van}{van}}~Dijk et~al.(2020){\VAN{Dijk}{van}{van}}~Dijk,
  Hogeweg, Doekes, and Takeuchi]{van2020slightly}
B.~{\VAN{Dijk}{van}{van}}~Dijk, P.~Hogeweg, H.~M. Doekes, and N.~Takeuchi.
\newblock Slightly beneficial genes are retained by bacteria evolving dna
  uptake despite selfish elements.
\newblock \emph{Elife}, 9:\penalty0 e56801, 2020.

\bibitem[Elworth et~al.(2019)Elworth, Ogilvie, Zhu, and
  Nakhleh]{elworth2019advances}
R.~L. Elworth, H.~A. Ogilvie, J.~Zhu, and L.~Nakhleh.
\newblock Advances in computational methods for phylogenetic networks in the
  presence of hybridization.
\newblock In \emph{Bioinformatics and Phylogenetics}, pages 317--360. Springer,
  2019.

\bibitem[Erd{\H{o}}s et~al.(2019)Erd{\H{o}}s, Semple, and
  Steel]{erdHos2019class}
P.~L. Erd{\H{o}}s, C.~Semple, and M.~Steel.
\newblock A class of phylogenetic networks reconstructable from ancestral
  profiles.
\newblock \emph{Mathematical biosciences}, 313:\penalty0 33--40, 2019.

\bibitem[Francis and Moulton(2018)]{francis2018identifiability}
A.~Francis and V.~Moulton.
\newblock Identifiability of tree-child phylogenetic networks under a
  probabilistic recombination-mutation model of evolution.
\newblock \emph{Journal of theoretical biology}, 446:\penalty0 160--167, 2018.

\bibitem[Francis and Steel(2015)]{francis2015phylogenetic}
A.~R. Francis and M.~Steel.
\newblock Which phylogenetic networks are merely trees with additional arcs?
\newblock \emph{Systematic biology}, 64\penalty0 (5):\penalty0 768--777, 2015.

\bibitem[Gambette et~al.(2017)Gambette, van Iersel, Jones, Lafond, Pardi, and
  Scornavacca]{gambette2017rearrangement}
P.~Gambette, L.~van Iersel, M.~Jones, M.~Lafond, F.~Pardi, and C.~Scornavacca.
\newblock Rearrangement moves on rooted phylogenetic networks.
\newblock \emph{PLoS computational biology}, 13\penalty0 (8):\penalty0
  e1005611, 2017.

\bibitem[Gusfield and Bansal(2005)]{gusfield2005fundamental}
D.~Gusfield and V.~Bansal.
\newblock A fundamental decomposition theory for phylogenetic networks and
  incompatible characters.
\newblock In \emph{Annual International Conference on Research in Computational
  Molecular Biology}, pages 217--232. Springer, 2005.

\bibitem[Harding(1971)]{harding1971probabilities}
E.~Harding.
\newblock The probabilities of rooted tree-shapes generated by random
  bifurcation.
\newblock \emph{Advances in Applied Probability}, pages 44--77, 1971.

\bibitem[Heath et~al.(2008)Heath, Zwickl, Kim, and Hillis]{heath2008taxon}
T.~A. Heath, D.~J. Zwickl, J.~Kim, and D.~M. Hillis.
\newblock Taxon sampling affects inferences of macroevolutionary processes from
  phylogenetic trees.
\newblock \emph{Systematic biology}, 57\penalty0 (1):\penalty0 160--166, 2008.

\bibitem[Huber et~al.(2019)Huber, van Iersel, Janssen, Jones, Moulton,
  Murakami, and Semple]{huber2019rooting}
K.~T. Huber, L.~van Iersel, R.~Janssen, M.~Jones, V.~Moulton, Y.~Murakami, and
  C.~Semple.
\newblock Rooting for phylogenetic networks.
\newblock \emph{arXiv preprint arXiv:1906.07430}, 2019.

\bibitem[Huson et~al.(2010)Huson, Rupp, and Scornavacca]{huson2010phylogenetic}
D.~H. Huson, R.~Rupp, and C.~Scornavacca.
\newblock \emph{Phylogenetic networks: concepts, algorithms and applications}.
\newblock Cambridge University Press, 2010.

\bibitem[{\VAN{Iersel}{van}{van}}~Iersel
  et~al.(2009){\VAN{Iersel}{van}{van}}~Iersel, Keijsper, Kelk, Stougie, Hagen,
  and Boekhout]{van2009constructing}
L.~{\VAN{Iersel}{van}{van}}~Iersel, J.~Keijsper, S.~Kelk, L.~Stougie, F.~Hagen,
  and T.~Boekhout.
\newblock Constructing level-2 phylogenetic networks from triplets.
\newblock \emph{IEEE/ACM Transactions on Computational Biology and
  Bioinformatics}, 6\penalty0 (4):\penalty0 667--681, 2009.

\bibitem[{\VAN{Iersel}{van}{van}}~Iersel
  et~al.(2019{\natexlab{a}}){\VAN{Iersel}{van}{van}}~Iersel, Janssen, Jones,
  Murakami, and Zeh]{van2019polynomial}
L.~{\VAN{Iersel}{van}{van}}~Iersel, R.~Janssen, M.~Jones, Y.~Murakami, and
  N.~Zeh.
\newblock Polynomial-time algorithms for phylogenetic inference problems
  involving duplication and reticulation.
\newblock \emph{IEEE/ACM transactions on computational biology and
  bioinformatics}, 17\penalty0 (1):\penalty0 14--26, 2019{\natexlab{a}}.

\bibitem[{\VAN{Iersel}{van}{van}}~Iersel
  et~al.(2019{\natexlab{b}}){\VAN{Iersel}{van}{van}}~Iersel, Janssen, Jones,
  Murakami, and Zeh]{van2019practical}
L.~{\VAN{Iersel}{van}{van}}~Iersel, R.~Janssen, M.~Jones, Y.~Murakami, and
  N.~Zeh.
\newblock A practical fixed-parameter algorithm for constructing tree-child
  networks from multiple binary trees.
\newblock \emph{arXiv preprint arXiv:1907.08474}, 2019{\natexlab{b}}.

\bibitem[{\VAN{Iersel}{van}{van}}~Iersel
  et~al.(2020{\natexlab{a}}){\VAN{Iersel}{van}{van}}~Iersel, Janssen, Jones,
  Murakami, and Zeh]{van2020unifying}
L.~{\VAN{Iersel}{van}{van}}~Iersel, R.~Janssen, M.~Jones, Y.~Murakami, and
  N.~Zeh.
\newblock A unifying characterization of tree-based networks and orchard
  networks using cherry covers.
\newblock \emph{arXiv preprint arXiv:2004.07677}, 2020{\natexlab{a}}.

\bibitem[{\VAN{Iersel}{van}{van}}~Iersel
  et~al.(2020{\natexlab{b}}){\VAN{Iersel}{van}{van}}~Iersel, Moulton, and
  Murakami]{van2020reconstructibility}
L.~{\VAN{Iersel}{van}{van}}~Iersel, V.~Moulton, and Y.~Murakami.
\newblock Reconstructibility of unrooted level-k phylogenetic networks from
  distances.
\newblock \emph{Advances in Applied Mathematics}, 120:\penalty0 102075,
  2020{\natexlab{b}}.

\bibitem[Janssen(2021)]{janssen2021rearranging}
R.~Janssen.
\newblock \emph{Rearranging Phylogenetic Networks}.
\newblock PhD thesis, PhD thesis, Delft University of Technology, 2021.

\bibitem[Janssen and Murakami(2021)]{janssen2021cherry}
R.~Janssen and Y.~Murakami.
\newblock On cherry-picking and network containment.
\newblock \emph{Theoretical Computer Science}, 856:\penalty0 121--150, 2021.

\bibitem[Janssen et~al.(2018)Janssen, Jones, Erd{\H{o}}s, van Iersel, and
  Scornavacca]{janssen2018exploring}
R.~Janssen, M.~Jones, P.~L. Erd{\H{o}}s, L.~van Iersel, and C.~Scornavacca.
\newblock Exploring the tiers of rooted phylogenetic network space using tail
  moves.
\newblock \emph{Bulletin of mathematical biology}, 80\penalty0 (8):\penalty0
  2177--2208, 2018.

\bibitem[Kaufman and Rousseeuw(2009)]{kmedoids}
L.~Kaufman and P.~J. Rousseeuw.
\newblock \emph{Finding Groups in Data: An Introduction to Cluster Analysis}.
\newblock Hoboken, New Jersey: John Wiley \& Sons, Inc., 2009.

\bibitem[Klawitter(2020)]{klawitter2020spaces}
J.~Klawitter.
\newblock \emph{Spaces of phylogenetic networks}.
\newblock PhD thesis, PhD thesis, University of Auckland, in preparation, 2020.

\bibitem[Lance and Williams(1966)]{Lance}
G.~N. Lance and W.~T. Williams.
\newblock {Computer Programs for Hierarchical Polythetic Classification
  (``Similarity Analyses'')}.
\newblock \emph{The Computer Journal}, 9\penalty0 (1):\penalty0 60--64, 1966.

\bibitem[Linz and Semple(2019)]{linz2019attaching}
S.~Linz and C.~Semple.
\newblock Attaching leaves and picking cherries to characterise the
  hybridisation number for a set of phylogenies.
\newblock \emph{Advances in Applied Mathematics}, 105:\penalty0 102--129, 2019.

\bibitem[Liu(2021)]{LIU20211}
P.~Liu.
\newblock A tree distinguishing polynomial.
\newblock \emph{Discrete Applied Mathematics}, 288:\penalty0 1--8, 2021.

\bibitem[Liu et~al.(2020)Liu, Biller, Gould, and Colijn]{Liu20202}
P.~Liu, P.~Biller, M.~Gould, and C.~Colijn.
\newblock Polynomial phylogenetic analysis of tree shapes.
\newblock \emph{bioRxiv}, 2020.

\bibitem[Morando et~al.(2020)Morando, Olave, Avila, Sites~Jr, and
  Leach{\'e}]{morando2020phylogenomic}
M.~Morando, M.~Olave, L.~J. Avila, J.~W. Sites~Jr, and A.~D. Leach{\'e}.
\newblock Phylogenomic data resolve higher-level relationships within south
  american liolaemus lizards.
\newblock \emph{Molecular phylogenetics and evolution}, 147:\penalty0 106781,
  2020.

\bibitem[Morin and Moret(2006)]{morin2006netgen}
M.~M. Morin and B.~M. Moret.
\newblock Netgen: generating phylogenetic networks with diploid hybrids.
\newblock \emph{Bioinformatics}, 22\penalty0 (15):\penalty0 1921--1923, 2006.

\bibitem[Pons et~al.(2019)Pons, Scornavacca, and Cardona]{pons2019generation}
J.~C. Pons, C.~Scornavacca, and G.~Cardona.
\newblock Generation of level-k lgt networks.
\newblock \emph{IEEE/ACM Transactions on Computational Biology and
  Bioinformatics}, 17\penalty0 (1):\penalty0 158--164, 2019.

\bibitem[Sainudiin and V{\'e}ber(2016)]{sainudiin2016beta}
R.~Sainudiin and A.~V{\'e}ber.
\newblock A beta-splitting model for evolutionary trees.
\newblock \emph{Royal Society open science}, 3\penalty0 (5):\penalty0 160016,
  2016.

\bibitem[Semple and Simpson(2018)]{semple2018phylogenetic}
C.~Semple and J.~Simpson.
\newblock When is a phylogenetic network simply an amalgamation of two trees?
\newblock \emph{Bulletin of mathematical biology}, 80\penalty0 (9):\penalty0
  2338--2348, 2018.

\bibitem[Shao(1990)]{shao1990tree}
K.-T. Shao.
\newblock Tree balance.
\newblock \emph{Systematic Zoology}, 39\penalty0 (3):\penalty0 266--276, 1990.

\bibitem[Sol{\'\i}s-Lemus and An{\'e}(2016)]{solis2016inferring}
C.~Sol{\'\i}s-Lemus and C.~An{\'e}.
\newblock Inferring phylogenetic networks with maximum pseudolikelihood under
  incomplete lineage sorting.
\newblock \emph{PLoS genetics}, 12\penalty0 (3):\penalty0 e1005896, 2016.

\bibitem[Steel(2016)]{steel2016phylogeny}
M.~Steel.
\newblock \emph{Phylogeny: discrete and random processes in evolution}.
\newblock SIAM, 2016.

\bibitem[Wen et~al.(2016)Wen, Yu, and Nakhleh]{wen2016bayesian}
D.~Wen, Y.~Yu, and L.~Nakhleh.
\newblock Bayesian inference of reticulate phylogenies under the multispecies
  network coalescent.
\newblock \emph{PLoS genetics}, 12\penalty0 (5):\penalty0 e1006006, 2016.

\bibitem[Whidden et~al.(2013)Whidden, Beiko, and Zeh]{whidden2013fixed}
C.~Whidden, R.~G. Beiko, and N.~Zeh.
\newblock Fixed-parameter algorithms for maximum agreement forests.
\newblock \emph{SIAM Journal on Computing}, 42\penalty0 (4):\penalty0
  1431--1466, 2013.

\bibitem[Yang et~al.(2021)Yang, Feiner, Pinho, While, Kaliontzopoulou, Harris,
  Salvi, and Uller]{yang2021extensive}
W.~Yang, N.~Feiner, C.~Pinho, G.~M. While, A.~Kaliontzopoulou, D.~J. Harris,
  D.~Salvi, and T.~Uller.
\newblock Extensive introgression and mosaic genomes of mediterranean endemic
  lizards.
\newblock \emph{Nature communications}, 12\penalty0 (1):\penalty0 1--8, 2021.

\bibitem[Yu and Nakhleh(2015)]{yu2015maximum}
Y.~Yu and L.~Nakhleh.
\newblock A maximum pseudo-likelihood approach for phylogenetic networks.
\newblock \emph{BMC genomics}, 16\penalty0 (10):\penalty0 1--10, 2015.

\bibitem[Zhang et~al.(2018)Zhang, Ogilvie, Drummond, and
  Stadler]{zhang2018bayesian}
C.~Zhang, H.~A. Ogilvie, A.~J. Drummond, and T.~Stadler.
\newblock Bayesian inference of species networks from multilocus sequence data.
\newblock \emph{Molecular biology and evolution}, 35\penalty0 (2):\penalty0
  504--517, 2018.

\bibitem[Zhang(2016)]{zhang2016tree}
L.~Zhang.
\newblock On tree-based phylogenetic networks.
\newblock \emph{Journal of Computational Biology}, 23\penalty0 (7):\penalty0
  553--565, 2016.

\bibitem[Zhu et~al.(2018)Zhu, Wen, Yu, Meudt, and Nakhleh]{zhu2018bayesian}
J.~Zhu, D.~Wen, Y.~Yu, H.~M. Meudt, and L.~Nakhleh.
\newblock Bayesian inference of phylogenetic networks from bi-allelic genetic
  markers.
\newblock \emph{PLoS computational biology}, 14\penalty0 (1):\penalty0
  e1005932, 2018.

\end{thebibliography}

\clearpage

\appendix
\section{Description and pseudo-code for the new network generators}
\subsection{Beta-splitting networks}
Here, we describe how the beta-splitting network generator adds edges to a tree or existing network. Algorithms~\ref{alg:AddEdgesUniform} and~\ref{alg:AddEdgesHorizontal} are quite straightforward: they either choose two edges uniformly at random from the network, or two \emph{leaf}-edges uniformly at random. In Algorithm~\ref{alg:AddEdgeLocal}, which adds the reticulations locally, the two edges are chosen as follows. First, one edge of the network is chosen uniformly at random, and then a random walk is performed through the network (with uniformly chosen steps). This random walk stops after each step with a stopping probability $P_{\text{stop}}$. The higher $P_{\text{stop}}$, the more local the reticulations are. If $P_{\text{stop}}$ is low,this random walk may take a very large number of steps. Hence, in our implementation, we also use a fixed maximal number of steps for this random walk.

\vspace{\baselineskip}
\begin{algorithm}[H]\label{alg:AddEdgesUniform}
 \KwData{A network $N$}
 \KwResult{A network $N'$ obtained from $N$ by adding one arc.}
 Set $N'=N$\;
 Pick two edges $e_1,e_2$ of $N'$ uniformly at random\;
 Subdivide these edges with nodes $v_1$, $v_2$\;
 \uIf{$v_2$ is above $v_1$}{
     Add the arc $(v_2,v_1)$ to $N'$\;
 }
 \Else{
     Add the arc $(v_1,v_2)$ to $N'$\;
 }
\Return $N'$\;
\caption{\textsc{AddEdgeUniform}$(N)$}
\end{algorithm}

\vspace{\baselineskip}
\begin{algorithm}[H]\label{alg:AddEdgeLocal}
 \KwData{A network $N$ and a probability $p_{stop}$.}
 \KwResult{A network $N'$ obtained from $N$ by adding one arc.}
 Set $N'=N$\;
 \While{True}{
    Pick an edges $e_1$ of $N$ uniformly at random\;
    Let $(u,v)$ be a random orientation of $e_1$\;
    Let $\mathrm{stop\_value}=1$\;
    \While{$\mathrm{stop\_value}>p_{stop}$}{
        Set $u=v$\;
        Let $v$ be a random neighbour of $v$\;
        Sample $\mathrm{stop\_value}$ uniformly at random from $[0,1]$\;
    }
    Let $e_2$ be the orientation of $(u,v)$ as found in $N$\;
    \If{$e_1\neq e_2$}{
        Quit the While-loop\;
    } 
 }
Subdivide $e_1$ and $e_2$ with nodes $v_1$, $v_2$\;
\uIf{$v_2$ is above $v_1$}{
    Add the arc $(v_2,v_1)$ to $N$\;
}
\Else{
    Add the arc $(v_1,v_2)$ to $N$\;
}
\Return $N$\;
\caption{\textsc{AddEdgeLocal}$(T,p_{stop})$}
\end{algorithm}

\vspace{\baselineskip}
\begin{algorithm}[H]\label{alg:AddEdgesHorizontal}
 \KwData{A network $N$}
 \KwResult{A network $N'$ obtained from $N$ by adding one arc.}
 Set $N'=N$\;
 Pick two leaves $l_1,l_2$ of $N'$ uniformly at random\;
 Subdivide the incoming edges of these leaves with nodes $v_1$, $v_2$\;
 Add the arc $(v_1,v_2)$ to $N'$\;
\Return $N'$\;
\caption{\textsc{AddEdgeHorizontal}$(N)$}
\end{algorithm}
\vspace{\baselineskip}

\subsection{Heath networks}
In this section, we give more details pertaining the implementation of our network generator based on the Heath branching process \cite{heath2008taxon}. Algorithms~\ref{alg:UpdateRate} and~\ref{alg:Heath} contain the pseudo-code for the model, and in the next subsection, we show that updating the distances as we do in Algorithm~\ref{alg:Heath} is correct.

\vspace{\baselineskip}
\begin{algorithm}[H]\label{alg:UpdateRate}
 \KwData{A rate $r$, parameters $r_{\alpha},r_{\beta}$ for the prior gamma distribution of $r$, and parameters $u_{\alpha},u_{\beta}$ for the update gamma distribution.}
 \KwResult{A new rate $r'$}
 Sample $u$ from $\Gamma(u_{\alpha},u_{\beta})$\;
 Let $r'=ur$ be the proposed new rate\;
 With probability $\min(1,\pdf_{\Gamma(r_{\alpha},r_{\beta})}(r')/\pdf_{\Gamma(r_{\alpha},r_{\beta})}(r'))$, \Return $r'$\;
 Otherwise, \Return $r$\;

\caption{\textsc{UpdateRate}$(r,r_{\alpha},r_{\beta},u_{\alpha},u_{\beta})$}
\end{algorithm}

\subsubsection{Updating distances}\label{sec:HeathUpdateDistance}

As an example of an extension of a Markov model to a more flexible birth-hybridization model, we extend the model presented in \cite{heath2008taxon}, which uses auto-correlated speciation and extinction rates. To create networks, we add the options of including hybridization events. These events can be added to other Markov splitting models in the same way. So, if different or more realistic splitting models are desired (e.g., with speciation bursts) hybridization can easily be added to those as well.

Hybridization in this model should be thought of as hybrid speciation, and not hybrid introgression, that is, m-type reticulation model are used. Indeed, a hybridization event in this generator takes two extant taxa and combines them into a new taxon. The two parental taxa stay present in the network, so the number of taxa in the network increases by one.

In our model, the hybridization rate between two taxa depends on their distance. To keep it simple, we use the piece-wise linear function 
\[
f_{hyb}(d) = \begin{cases} h_{lr}                                           &\mbox{if } d \leq h_l \\
                           h_{lr}+\frac{(h_{rr}-h_{lr})(d-h_l)}{h_r-h_l}    &\mbox{if } h_l<d<h_r  \\
                           h_{rr}                                           &\mbox{if } d \geq h_r \end{cases} .\]

Recomputing the total rates for speciation $\spe(N)=\sum_{x}\spe(x)$ and extinction $\ext(N)=\sum_{x}\ext(x)$ is quite simple, as it only requires one summation over the individual speciation and extinction rates $\spe(x)$ and $\ext(x)$ of all the extant taxa $x$. Fully recomputing $\hyb(N)$ would entail finding all up-down paths between each pair of leaves. As there can be exponentially many up-down paths between a given pair of leaves, this full recomputation is not feasible. 

Hence, to efficiently compute $\hyb(N)$ in our implementation, we keep track of the weighted distances between all pairs of leaves. Updating these distances requires few calculations for each of the types of events, as we will show next.

\vspace{\baselineskip}
\begin{algorithm}[H]\label{alg:Heath}
% \KwData{}
% \KwResult{}
Let $N$ be the tree on one leaf $l$\;
Sample speciation rate $\spe(l)=\spe(N)$ of $l$ from $\Gamma(s_{\alpha},s_{\beta})$\;
Sample extinction rate $\ext(l)=\ext(N)$ of $l$ from $\Gamma(s_{\alpha},s_{\beta})$\;
Sample hgt rate $\hgt(l)=\hgt(N)$ of $l$ from $\Gamma(s_{\alpha},s_{\beta})$\;
Set $\hyb(N)=0$\;
Sample the next event time $t^+=$ from an exponential distribution with parameter $1/(\spe(l)+\ext(l)+\hgt(l))$\;
\While{$|\Ta(N)| < $ max\_taxa and $|\Ta_{extant}(N)|>0$}{
    Pick a random event by their weights $\spe(N),\ext(N),\hgt(N),$ and $\hyb(N)$\;
    \If{speciation}{
        Pick a random extant leaf $l$, weighted by by their speciation rates\;
        Attach two new edges with length $0$ to this leaf\;
        Update the rates of l to get rates for the new leaves\;
        Replace $l$ with the two new leaves in the set of extant leaves\;
    }
    \ElseIf{extinction}{
        Pick a random extant leaf $l$, weighted by by their extinction rates\;
        Remove $l$ from the set of extant leaves\;
    }
    \ElseIf{hgt}{
        Pick a random extant leaf $l$, weighted by by their hgt rates\;
        Pick another leaf $m$ uniformly at random\;
        Add length-$0$ edges $(m,l)$, $(l,l')$, and $(m,m')$, where $l'$ and $m'$ are new leaves\;
        Pick an inheritance probability $p$ along $(m,l)$ uniformly at random from $[0,hgt_{inh}]$\;
        Copy the rates of $m$ to $m'$\;
        Set the rates of $l'$ to be $p r(m)+(1-p) r(l)$, and then update these\;
        Replace $l$ and $m$ with $l'$ and $m'$ in the set of extant taxa\;
    }
    \ElseIf{hybridization}{
        Pick a random pair of extant leaves $(l,m)$, weighted by their hybridization rates $f_{hyb}(d(l,m))$\;
        Add length-$0$ edges $(m,x)$, $(l,x)$, $(l,l')$, $(m,m')$, and $(x,q')$, where $l'$, $q'$, and $m'$ are new leaves\;
        Pick an inheritance probability $p$ along $(m,x)$ uniformly at random from $[0,1]$\;
        Copy the rates of $m$ and $l$ to $m'$ and $l'$\;
        Set the rates of $q'$ to be $p r(m)+(1-p) r(l)$, and then update these\;
        Replace $l$ and $m$ with $l'$, $q'$, and $m'$ in the set of extant taxa\;
    }
    Extend the length of all pendant arcs leading to extant taxa by $t^+$\;
    Recompute $\spe(N)$, $\ext(N)$, $\hgt(N)$, and $\hyb(N)$\;
    Sample $t^+=$ from an exponential distribution with parameter $1/(\spe(N)+\ext(N)+\hgt(N)+\hyb(N))$\;
}
\Return $N$\;
\caption{\textsc{Heath}$(s_{\alpha},s_{\beta},e_{\alpha},e_{\beta},hgt_{\alpha},hgt_{\beta},hgt_{inh},u_{\alpha},u_{\beta},h_l,h_r,h_{lr},h_{rr})$}
\end{algorithm}

We will now show how to efficiently update the distances between all pairs of leaves efficiently after each of the following events: speciation, extinction, horizontal gene transfer/hybrid introgression, and hybrid speciation. In all these events, we assume all new edges have length 0 and are extended only later.

\paragraph{Speciation}
Suppose the speciation event splits the extant taxon $l$ into two new taxa $m$ and $m'$. As the length of both the two new edges is zero, the distance between $m$ and $m'$ is 0. For any other leaf $q$, we get $d(m,q)=d(m',q)=d(l,q)$; and for each pair of leaves not involving $m$ or $m'$, the distance is not changed by the speciation event.

\paragraph{Extinction}
An extinction event does not affect the distances between pairs of extant taxa. The only change is that the set of extant taxa changes, i.e., in the set of pairs that we compute the hybridization rate for.

\paragraph{HGT}
Suppose the HGT event uses $l$ as accepting taxon, and $m$ as donating taxon. Adding the arc $(m,l)$ only affects the distances between $l$ and other leaves. Indeed, for pairs of leaves where $l$ is not involved, the set of up-down paths is not affected by adding $(m,l)$. If the inheritance probability along $(m,l)$ is $p$, then the new distances involving $l$ are $d(l,q)=pd(m,q)+(1-p)d(l,q)$.

\paragraph{Hybridization}
Suppose this event adds a hybrid $q$ between $l$ and $m$. The new arcs $(l,q)$ and $(m,q)$ can only be part of up-down paths between $q$ and another leaf. Hence, the only affected distances involve $q$ and another leaf. If the inheritance probability along $(m,q)$ is $p$, then the new distances involving $q$ are $d(q,x)=pd(m,x)+(1-p)d(l,x)$.

\paragraph{Merging}
Suppose this event merges $l$ and $m$ into a reticulation $r$ and introduces the new edge $(r,q)$ (of length 0). Note that the only affected distances involve $q$ and another leaf. Let $p_l$ and $p_m$ be the parents of $l$ and $m$, and let the inheritance probability along $(p_l,r)$ be $p$, then the new distances involving $q$ are $d(q,x)=pd(l,x)+(1-p)d(m,x)$.

\paragraph{Extending pendant arcs}
By extending the pendant arcs of the extant taxa by $t^+$, each up-down path between a pair of extant taxa becomes $2t^+$ longer. Hence, the weighted sums of the distances along these up-down paths also become $2t^+$ longer.

% \subsection{MCMC}
% To define such a random walk, we need to define a distribution for the steps. Here, we use rSPR moves \cite{gambette}. The space of phylogenetic networks under rSPR moves is connected \cite{janssen2017rearrangement}, and the same holds if all internal nodes are labelled with a colour class for each node type (i.e., tree nodes cannot become reticulations) \cite{myThesisUnpublished}. An rSPR move can be specified by a triple consisting of a moving edge, a moving endpoint, and a target edge. Not all moves result in another network. If a move does result in a network, we call the move valid. To get a symmetric step distribution, we pick a moving edge and a target edge uniformly at random, and then we uniformly at random pick one endpoint of the moving edge to be the moving endpoint, rejecting the proposal only when the move is invalid. Because the number of edges is fully determined by the number of reticulations and leaves, and because each rSPR move is reversible, this results in a symmetric step probability on internally labelled networks. 

\section{Network generators and classes}\label{sec:GeneratorClassesProofs}
In the following lemmas, we prove that certain processes inherently lead to certain classes of networks, and other are able to produce any network.

\begin{lemma}\label{lem:ZODSAllNetworks}
Let $N$ be an arbitrary network topology, then using merging (adding edges y-type) and speciation can generate $N$.
\end{lemma}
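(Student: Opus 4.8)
```latex
The plan is to show that the two operations---speciation and $y$-type merging---suffice to build any binary network topology $N$ by a top-down construction. I would set up an induction, but rather than inducting on $N$ directly, I would choose a convenient processing order for the nodes of $N$. The key observation is that the generator works top-down: speciation takes an extant leaf and gives it two children, while $y$-type merging takes two extant leaves and glues them into a single new child leaf. The construction being generated always has a ``frontier'' of extant leaves, and each operation extends the partially built network downward by realizing one more internal node of $N$.

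First I would fix a topological ordering $v_1,\ldots,v_k$ of the internal (tree and reticulation) nodes of $N$ that respects the partial order induced by the arcs, i.e.\ every node appears after both of its parents. I would then process the nodes in this order, maintaining the invariant that after processing a prefix, the extant leaves of the partial network are in bijection with the ``dangling'' arcs of $N$ that cross from the already-built region into the not-yet-built region. For a tree node of $N$, its single incoming arc corresponds to exactly one extant leaf, and I apply a \emph{speciation} event to that leaf, creating two children that correspond to the two outgoing arcs of the tree node. For a reticulation node of $N$, its two incoming arcs correspond (by the invariant and the chosen order) to two distinct extant leaves, and I apply a \emph{merging} ($y$-type) event to those two leaves, producing a single new leaf that corresponds to the unique outgoing arc of the reticulation. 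In each case the invariant is preserved, and the set of dangling arcs shrinks toward the leaves of $N$. When all internal nodes have been processed, the remaining extant leaves are exactly the leaves of $N$, and the built network is isomorphic to $N$.

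The main obstacle---and the step requiring the most care---is guaranteeing that when a reticulation node is reached, its \emph{two} parents have already been processed so that both incoming arcs correspond to genuinely distinct currently-extant leaves available for merging. This is precisely what a topological ordering buys us: since $N$ is acyclic, such an ordering exists, and both parents of any reticulation precede it, so their contributing arcs are already present as extant leaves at the moment of merging. One should also check the degenerate interactions, e.g.\ that the two parents do not coincide (ruled out in a binary network, where a reticulation has two distinct in-arcs) and that the single-leaf starting configuration matches the root of $N$ after its first speciation. I would verify the invariant carefully across both event types and confirm that no auxiliary structure outside $N$ is ever created, so that the final network is exactly $N$ rather than merely containing it as a subnetwork.
```
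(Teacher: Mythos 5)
Your construction is correct, and it takes a genuinely different route from the paper. The paper argues by induction on the number of reticulations, working \emph{backwards}: if $N$ has a leaf directly below a reticulation, that leaf is removed and the reticulation split to undo one merging event, yielding a network with one fewer reticulation; otherwise a pendant subtree below some reticulation is pruned (undoing speciations) to reduce to the first case. Your proof instead runs \emph{forwards}: fix a topological ordering of the internal nodes of $N$ and simulate the generator, realizing each tree node by a speciation and each reticulation by a $y$-type merge, with the frontier invariant that extant leaves of the partial network biject with the arcs of $N$ crossing from the built to the unbuilt region. Acyclicity guarantees both in-arcs of a reticulation are on the frontier when it is reached, which is exactly the point where the paper instead needs its two-case analysis. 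What each buys: your argument is uniform (one invariant, no cases), shows the number of events equals the number of internal nodes of $N$, and makes visibly clear that nothing outside $N$ is ever built; the paper's deconstruction is shorter to state given its base case (trees arise from speciation alone) and matches the inductive style reused elsewhere in its appendix. One small imprecision in your write-up: the worry about the two parents of a reticulation coinciding is not really ``ruled out by binarity'' --- what matters is that the two \emph{in-arcs} are distinct dangling arcs, hence correspond to two distinct extant leaves, so the merge is well-defined even if (under definitions permitting parallel arcs) the parents coincide. This does not affect the validity of the proof.
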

\begin{proof}
We prove this lemma by induction on the number of reticulations. First, for the basis of the induction, note that each network with 0 reticulations (i.e., each tree) can be generated by only using speciation events. 

Now suppose that each network with at most $r$ reticulations can be generated, and let $N$ be an arbitrary network with $n$ leaves and $r+1$ reticulations. If $N$ has a leaf that is directly below a reticulation node. Removing this leaf, and splitting its parent into two nodes, one for each incoming arc, we get a network $N'$ with $n+1$ leaves and $r$ reticulations from which $N$ can be obtained by one merging event. By the induction hypothesis, $N'$ can be generated and, therefore, so can $N$.

If $N$ contains no leaf directly below a reticulation, then there must be a reticulation in $N$ with a pendant subtree directly below it. Let $N'$ be the network obtained from $N$ by removing this subtree. It is clear that $N$ can be obtained from $N'$ using only speciation events, as each tree can be generated using speciation events. Moreover, by the arguments in the previous paragraph, $N'$ can be generated as it has $r+1$ reticulations and a leaf directly below a reticulation. We conclude that any network with $r+1$ reticulations can be produced using merging and speciation events. By induction, this implies that all networks can be generated thusly.
\end{proof}

\begin{lemma}\label{lem:OnlyHGTOrchard}
Using HGT (adding reticulations n-type at the leaves) together with speciation generates only orchard networks. %Conversely, each orchard network can be generated this way.
\end{lemma}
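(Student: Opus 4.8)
The plan is to induct on the number of operations used to build the network, exploiting the recursive nature of the orchard definition. Since a network is orchard precisely when some sequence of cherry and reticulated-cherry reductions reduces it to a single vertex, it suffices to show that any network produced by this process reduces \emph{in one step} to a smaller network that is again produced by the process; by the induction hypothesis that smaller network is orchard, and prepending the single reduction step then witnesses that the original network is orchard. Concretely, I would induct on the total number of speciation and n-type events, with base case the single-vertex network, which is trivially orchard. Inducting on the number of construction steps (rather than on the reticulation number alone) is what lets me handle arbitrary interleavings of speciation and n-type events for free: if $N$ is built by a sequence ending in one operation applied to a network $M$, then $M$ is orchard by the hypothesis, and I only ever need to reduce the structure created by that last operation, which is manifestly present in $N$.

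For the speciation step, suppose $N$ is obtained from $M$ by turning a leaf $\ell$ into the parent of two new leaves $a,b$. Then $\{a,b\}$ is a cherry of $N$; reducing it deletes one of $a,b$ and suppresses $\ell$, returning exactly $M$ (up to the label of the surviving leaf, which is irrelevant to being orchard). Hence $N$ reduces in one step to the orchard network $M$.

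The crux is the n-type step. Here I would observe that applying the n-type operation to two leaves $x,y$ -- adding pendant leaves $x'$ and $y'$ below $x$ and $y$ and a horizontal (hybrid) edge, say from $x$ to $y$ -- creates exactly a reticulated cherry on the new leaves $x',y'$: the node $y$ becomes a reticulation whose child is the leaf $y'$, while $x$ becomes the common parent of the leaf $x'$ and of the reticulation $y$. Reducing this reticulated cherry cuts the hybrid edge $(x,y)$, leaving $y$ with in-degree one (and child $y'$) and $x$ with out-degree one (and child $x'$); suppressing these two degree-two nodes reconnects $x'$ as a leaf in the former position of $x$ and $y'$ as a leaf in the former position of $y$, which is precisely $M$. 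The case where the hybrid edge points from $y$ to $x$ is symmetric after swapping the roles of $x$ and $y$. Thus $N$ again reduces in one step to the orchard network $M$.

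The one point demanding genuine care -- rather than a true obstacle -- is verifying that this single reduction exactly inverts the triggering event: that cutting the hybrid edge really leaves two degree-two nodes, that suppressing them restores the earlier topology, and that the reticulated cherry I invoke matches the paper's definition (two leaves $x',y'$, the reticulation $y$ parenting $y'$, and $x$ as the common parent of $x'$ and $y$). Since the operation is applied \emph{at the leaves}, both $x'$ and $y'$ are pendant immediately after the event, so the reticulated cherry is present exactly when the induction needs it, and the whole argument reduces to this local bookkeeping.
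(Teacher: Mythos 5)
Your proof is correct and is essentially the paper's argument made explicit: the paper observes that speciation and n-type additions are exactly the ``adding a pair'' operations of the cherry-picking framework in \cite{janssen2021cherry}, so reversing the construction sequence yields a cherry-picking sequence, which is precisely your induction with the reversal steps verified by hand. No gap; you have simply unpacked the citation into the local bookkeeping it relies on.
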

\begin{proof}
%The proof of the second part is similar to the proof of Lemma~\ref{lem:ZODSAllNetworks}. For the first part, 
Note that speciation and adding reticulations using the n-type are exactly the methods to ``add a pair to a network'' in the sense of \cite{janssen2021cherry}. The lemma now follows immediately, because these additions of pairs can be reversed, which gives a cherry-picking sequence for the network.
\end{proof}

\begin{lemma}\label{lem:OnlyHybTC}
Using hybridization (adding reticulations m-type at the leaves) together with speciation generates only temporal tree-child networks, and each (binary) temporal tree-child network can be generated this way.
\end{lemma}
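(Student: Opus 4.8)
The statement is an ``if and only if'': the networks produced by combining speciation with m-type hybridization are exactly the binary temporal tree-child networks. Recall that a network is \emph{temporal} if its nodes admit a labeling $t$ with $t(u)<t(v)$ whenever $(u,v)$ is a tree edge and $t(u)=t(v)$ whenever $(u,v)$ is an edge into a reticulation. I would prove the two inclusions separately, arguing soundness directly and completeness by induction.

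For soundness (every generated network is temporal tree-child), I would track the process and label each internal node by the index $i$ of the event that creates it, where events are numbered $1,2,\dots$ in the order they occur, and give every pendant leaf surviving to the end a label strictly larger than all event indices. A speciation event turns an extant leaf $v$ into a tree node whose two children are fresh leaves, and an m-type event turns two extant leaves $l,m$ into tree nodes and creates a reticulation $x$ with parents $l,m$ and a single fresh leaf child $q'$, while giving $l$ and $m$ one fresh leaf child each. Thus \emph{every} internal node is born with at least one leaf child, and a leaf can later only become a tree node, never a reticulation; this shows every tree node keeps a non-reticulation child and every reticulation's unique child is a non-reticulation, so the network is tree-child. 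For temporality, note that in an m-type event the two parents $l,m$ and the reticulation $x$ all receive the same label $i$, so both edges into $x$ join equal labels, while every tree edge runs from a node created at some event to a node created (or surviving) at a strictly later event, so tree edges strictly increase. Hence $t$ is a temporal labeling.

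For completeness I would induct on the number of nodes (equivalently, reverse the generation one event at a time), the base case being the single-leaf network. Given a binary temporal tree-child network $N$ with a temporal labeling $t$, let $t^\ast$ be the maximum value of $t$ over the internal nodes. If some reticulation $x$ attains $t(x)=t^\ast$, then its child $c$ cannot be internal (a tree-edge child would have label $>t^\ast$, and a reticulation child is forbidden by tree-child), so $c$ is a leaf; its two parents $p_1,p_2$ are tree nodes (a reticulation parent would be a stack, excluded by tree-child) with $t(p_i)=t^\ast$; and the second child $c_i$ of each $p_i$ must be a non-reticulation (else $p_i$ has two reticulation children, violating tree-child) and cannot be an internal tree node (label $>t^\ast$), hence is a leaf. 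This is precisely the configuration left behind by an m-type event, so deleting $x,c,c_1,c_2$ and turning $p_1,p_2$ back into leaves yields a binary network $N'$ with one fewer reticulation; the restriction of $t$ is still temporal and $N'$ is still tree-child, so by induction $N'$ is generatable and $N$ is obtained from it by one m-type event. If no reticulation attains $t^\ast$, then every internal node of label $t^\ast$ is a tree node all of whose children are leaves (an internal tree child would have larger label, and a reticulation child would itself attain $t^\ast$); reversing a speciation on such a node gives a smaller temporal tree-child network, completing the induction.

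The main obstacle is the completeness direction, specifically the claim that a maximum-time internal node always exposes a reducible configuration. This is exactly where both hypotheses are indispensable: temporality bounds the labels of tree-edge children, forcing them to be leaves at the top time slice, while tree-child forbids reticulation-reticulation adjacencies (so the parents of the top reticulation are tree nodes) and tree nodes with two reticulation children (so the sibling of the top reticulation is a leaf). I would then dispatch the minor edge cases---distinctness of the two parents $p_1,p_2$ and of the three leaves involved, and checking that suppressing $p_1,p_2$ to leaves keeps $N'$ binary and preserves the temporal labeling---but expect these to be routine once the case analysis above is in place.
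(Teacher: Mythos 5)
Your proof is correct and takes essentially the same route as the paper: soundness by labelling each internal node with the index of the event that created it (which simultaneously yields the temporal labelling and rules out stacks and double-reticulation children), and completeness by using the temporal labelling to decompose the network into speciation and m-type events. Your backward induction peeling off the maximum-label configuration is a rigorous, fleshed-out implementation of what the paper dispatches in a single sentence --- that the temporal labelling can be modified so each time slice corresponds to exactly one speciation or m-type hybridization event.
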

\begin{proof}
The first part follows because we can label each internal node with the number of the step (event) it was introduced in. This gives a temporal labelling of the network. Moreover, the such a network must be tree-child because an m-type reticulation event can never lead to two neighbouring reticulations or a tree node with two reticulation children. For the second part, note that the temporal labelling can be modified so that for each time $t$, the internal nodes with label $t$ either represent a speciation event, or an m-type hybridization event.
\end{proof}

\begin{lemma}\label{lem:ExtinctionAllNetworks}
HGT (n-type) or hybridization (m-type) together with extinction and speciation can generate any network.
\end{lemma}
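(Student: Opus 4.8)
The plan is to reduce to Lemma~\ref{lem:ZODSAllNetworks}, which already shows that y-type (merging) together with speciation generates every network topology. It therefore suffices to show that a single y-type event can be simulated by an n-type (respectively m-type) event together with a small number of extinction events, once extinct lineages are pruned and the resulting degree-$2$ vertices are suppressed. Substituting this local simulation for each merging step in the generating sequence guaranteed by Lemma~\ref{lem:ZODSAllNetworks}, while leaving the speciation steps unchanged, then yields a sequence that uses only n-type (resp.\ m-type) reticulations, speciation and extinction and produces the same network.

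For the m-type case, I would apply an m-type event to the two leaves $x$ and $y$ that a y-type event would merge. This creates a fresh reticulation $r$ with parents $x$ and $y$ and a pendant leaf $z$, together with two surplus leaves $x'$ and $y'$ hanging below $x$ and $y$. Immediately applying extinction to $x'$ and to $y'$ removes these two lineages; after pruning, both $x$ and $y$ have out-degree one, so suppressing them connects the former parents of $x$ and $y$ directly to $r$. The resulting local structure---a reticulation whose two parents are the former parents of $x$ and $y$ and whose child is the new leaf $z$---is exactly the outcome of a y-type event on $x$ and $y$.

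For the n-type case the argument is the same but needs only one extinction. An n-type event on $x$ and $y$ turns $y$ into a reticulation with parents $x$ and the old parent of $y$, gives $x$ a surplus leaf $x'$, and gives $y$ a new pendant leaf $y'$. Extinguishing $x'$ leaves $x$ with out-degree one; suppressing $x$ connects the former parent of $x$ directly to the reticulation $y$. The result is a reticulation whose parents are the former parents of $x$ and $y$ and whose child is $y'$, which again coincides with the y-type outcome, now with $y'$ playing the role of $z$.

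The only point requiring care is the bookkeeping for degree-$2$ suppression: one must check that after extinction the surplus-carrying nodes indeed become degree-$2$, that suppressing them reproduces the merged reticulation exactly, and that the extinct leaves can be removed before they take part in any later event. The last of these is harmless, since we are free to schedule the extinctions immediately after the reticulation event and only the topology matters. This verification is the main obstacle, but it is routine once the local equivalence above is set up; with that equivalence in hand, the substitution into the sequence from Lemma~\ref{lem:ZODSAllNetworks} completes the proof.
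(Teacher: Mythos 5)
Your proposal is correct and follows exactly the paper's own argument: reduce to Lemma~\ref{lem:ZODSAllNetworks} and simulate each merging (y-type) event by an n-type event plus one extinction or an m-type event plus two extinctions. The paper states this substitution in a single sentence; your write-up simply makes the local bookkeeping (pruning and degree-two suppression) explicit, and it is done correctly.
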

\begin{proof}
This follows directly from Lemma~\ref{lem:ZODSAllNetworks}, because each merging event can be substituted for a HGT event and one extinction event, or a hybridization event and two extinction events.
\end{proof}

\section{Supplementary figures and tables}
\setcounter{figure}{0}
\setcounter{table}{0}
In this section, we provide additional data in the form of tables and figures. In Table~\ref{tab:Classes}, for each generator, we give an estimate for the number of reticulations at which half of the generated networks are in each network class. This value is found by finding the intersection of the horizontal line where the fraction is a half with the lines in Figure~\ref{fig:Classes}. In Table~\ref{tab:Balance}, we show the average balance for each of the generators and each number of reticulations. The stars show that adding reticulations significantly impacts the balance of the networks compared to the networks with 1 reticulation for the same network generator. Figure~\ref{fig:CherriesBeta} shows the profiles for the number of (reticulated) cherries in the beta-splitting model. Note that the initial number of cherries depends primarily on the beta parameter, while all profiles, except when edges are added horizontally, are qualitatively the same. Figure~\ref{fig:BlobsBetaSplitByBeta} shows the dependence of the number of blobs and their sizes for the beta-splitting model. Note that the profiles are qualitatively the same, although the exact values differ slightly, as a result of the underlying tree shape.

\begin{table}[]
\caption{The number of reticulations at which half the networks are in a class for the different generators and parameters. Obtained by interpolation from the data with log-transformed number of reticulations, because in those data, the curves look more or less sigmoid (Figure~\ref{fig:Classes}), and so that we do not underestimate the effects at small numbers of reticulations.}\label{tab:Classes}
\vspace{\baselineskip}
\tiny
\centering
\begin{tabular}{lll|c|c|c|c|}
\cline{4-7}
                                    &                                &                & \multicolumn{4}{c|}{Class}                                                                                                                                                                            \\ \hline
\multicolumn{1}{|l|}{Generator}     & \multicolumn{1}{l|}{Parameter} & Locality       & \multicolumn{1}{l|}{Tree-based}                 & \multicolumn{1}{l|}{Orchard}                    & \multicolumn{1}{l|}{Stack-free}                 & \multicolumn{1}{l|}{Tree-child}                 \\ \hline
\multicolumn{1}{|l}{Beta-splitting} & Beta=0.0                       & Uniformly      & 41.2                                            & 24.0                                            & 10.1                                            & 6.6                                             \\ \cline{4-7} 
\multicolumn{1}{|l}{}               &                                & P\_stop=0.002  & 41.6                                            & 21.4                                            & 11.0                                            & 7.0                                             \\ \cline{4-7} 
\multicolumn{1}{|l}{}               &                                & P\_stop=0.2    & 33.9                                            & 9.4                                             & 8.9                                             & 6.9                                             \\ \cline{4-7} 
\multicolumn{1}{|l}{}               &                                & Horizontal     & \cellcolor[HTML]{000000}                        & \cellcolor[HTML]{000000}{\color[HTML]{000000} } & 12.6                                            & 7.6                                             \\ \hline
\multicolumn{1}{|l}{Zhang Ntk}      &                                &                & 32.7                                            & 25.2                                            & 9.1                                             & 6.2                                             \\ \hline
\multicolumn{1}{|l}{LGT Generator}  &                                & Locality\_0.01 & \cellcolor[HTML]{000000}                        & \cellcolor[HTML]{000000}                        & 10.5                                            & 6.9                                             \\ \cline{4-7} 
\multicolumn{1}{|l}{}               &                                & Locality\_0.1  & \cellcolor[HTML]{000000}                        & \cellcolor[HTML]{000000}                        & 10.4                                            & 6.6                                             \\ \cline{4-7} 
\multicolumn{1}{|l}{}               &                                & Locality\_1.0  & \cellcolor[HTML]{000000}                        & \cellcolor[HTML]{000000}                        & 12.5                                            & 7.4                                             \\ \hline
\multicolumn{1}{|l}{ZODS}           &                                &                & 33.5                                            & 28.2                                            & 9.8                                             & 6.2                                             \\ \hline
\multicolumn{1}{|l}{Heath}          & keep extinct                   & less local     & \cellcolor[HTML]{000000}                        & \cellcolor[HTML]{000000}                        & \cellcolor[HTML]{000000}                        & \cellcolor[HTML]{000000}                        \\ \cline{4-7} 
\multicolumn{1}{|l}{}               &                                & local          & \cellcolor[HTML]{000000}                        & \cellcolor[HTML]{000000}                        & \cellcolor[HTML]{000000}                        & \cellcolor[HTML]{000000}                        \\ \cline{4-7} 
\multicolumn{1}{|l}{}               & remove extinct                 & less local     & ?                                               & ?                                               & 26.8                                            & 13.9                                            \\ \cline{4-7} 
\multicolumn{1}{|l}{}               &                                & local          & ?                                               & ?                                               & 24.6                                            & 13.9                                            \\ \cline{4-7} 
\multicolumn{1}{|l}{}               & no extinction                  & less local     & \cellcolor[HTML]{000000}{\color[HTML]{000000} } & \cellcolor[HTML]{000000}{\color[HTML]{000000} } & \cellcolor[HTML]{000000}{\color[HTML]{000000} } & \cellcolor[HTML]{000000}{\color[HTML]{000000} } \\ \cline{4-7} 
\multicolumn{1}{|l}{}               &                                & local          & \cellcolor[HTML]{000000}{\color[HTML]{000000} } & \cellcolor[HTML]{000000}{\color[HTML]{000000} } & \cellcolor[HTML]{000000}{\color[HTML]{000000} } & \cellcolor[HTML]{000000}{\color[HTML]{000000} } \\ \hline
\multicolumn{1}{|l}{MCMC}           &                                &                & 33.1                                            & 19.7                                            & 8.9                                             & 6.5                                             \\ \hline
\end{tabular}
\end{table}

%%%%BALANCE

\begin{table}[]
\caption{The average balance of networks for different network generators. Stars indicate whether the balance is significantly different from the networks with 1 reticulation of the same generator ($*: p<0.01$, $**: p<0.001$). }\label{tab:Balance}
\vspace{\baselineskip}
\tiny
\centering
\begin{tabular}{ll|lllllllll|}
\cline{3-11}
                                                                      &               & \multicolumn{9}{c|}{Reticulations}                                                                                                                                                                                                                                           \\ \hline
\multicolumn{1}{|l}{Generator}                                        & Parameters    & \multicolumn{1}{c}{1}      & \multicolumn{1}{c}{2}      & \multicolumn{1}{c}{5}        & \multicolumn{1}{c}{10}       & \multicolumn{1}{c}{20}       & \multicolumn{1}{c}{50}       & \multicolumn{1}{c}{100}      & \multicolumn{1}{c}{200}      & \multicolumn{1}{c|}{500} \\ \hline
\multicolumn{1}{|l}{Beta-splitting beta=0.0}         & Uniform       & \multicolumn{1}{l|}{0.771} & \multicolumn{1}{l|}{0.769} & \multicolumn{1}{l|}{0.762}   & \multicolumn{1}{l|}{0.760}   & \multicolumn{1}{l|}{0.745**} & \multicolumn{1}{l|}{0.730**} & \multicolumn{1}{l|}{0.710**} & \multicolumn{1}{l|}{0.706**} & 0.712**                  \\ \cline{3-11} 
\multicolumn{1}{|l}{}                                                 & P\_stop=0.002 & \multicolumn{1}{l|}{0.770} & \multicolumn{1}{l|}{0.769} & \multicolumn{1}{l|}{0.763}   & \multicolumn{1}{l|}{0.757}   & \multicolumn{1}{l|}{0.744**} & \multicolumn{1}{l|}{0.722**} & \multicolumn{1}{l|}{0.707**} & \multicolumn{1}{l|}{0.699**} & 0.707**                  \\ \cline{3-11} 
\multicolumn{1}{|l}{}                                                 & P\_stop=0.2   & \multicolumn{1}{l|}{0.770} & \multicolumn{1}{l|}{0.768} & \multicolumn{1}{l|}{0.761}   & \multicolumn{1}{l|}{0.752*}  & \multicolumn{1}{l|}{0.737**} & \multicolumn{1}{l|}{0.705**} & \multicolumn{1}{l|}{0.671**} & \multicolumn{1}{l|}{0.635**} & 0.587**                  \\ \cline{3-11} 
\multicolumn{1}{|l}{}                                                 & Horizontal    & \multicolumn{1}{l|}{0.772} & \multicolumn{1}{l|}{0.773} & \multicolumn{1}{l|}{0.774}   & \multicolumn{1}{l|}{0.778}   & \multicolumn{1}{l|}{0.782}   & \multicolumn{1}{l|}{0.796**} & \multicolumn{1}{l|}{0.815**} & \multicolumn{1}{l|}{0.848**} & 0.894**                  \\ \hline
\multicolumn{1}{|l}{Zhang\_Ntk}                                       &               & \multicolumn{1}{l|}{0.779} & \multicolumn{1}{l|}{0.778} & \multicolumn{1}{l|}{0.773}   & \multicolumn{1}{l|}{0.769}   & \multicolumn{1}{l|}{0.756**} & \multicolumn{1}{l|}{0.744**} & \multicolumn{1}{l|}{0.747**} & \multicolumn{1}{l|}{}        &                          \\ \hline
\multicolumn{1}{|l}{LGT\_Generator}                  & Beta=0.01     & \multicolumn{1}{l|}{0.780} & \multicolumn{1}{l|}{0.779} & \multicolumn{1}{l|}{0.777}   & \multicolumn{1}{l|}{0.769}   & \multicolumn{1}{l|}{0.760**} & \multicolumn{1}{l|}{0.755**} & \multicolumn{1}{l|}{0.767*}  & \multicolumn{1}{l|}{0.796*}  & 0.848**                  \\ \cline{3-11} 
\multicolumn{1}{|l}{}                                                 & Beta=0.1      & \multicolumn{1}{l|}{0.782} & \multicolumn{1}{l|}{0.784} & \multicolumn{1}{l|}{0.779}   & \multicolumn{1}{l|}{0.778}   & \multicolumn{1}{l|}{0.771}   & \multicolumn{1}{l|}{0.782}   & \multicolumn{1}{l|}{0.807**} & \multicolumn{1}{l|}{0.845**} & 0.876**                  \\ \cline{3-11} 
\multicolumn{1}{|l}{}                                                 & Beta=1.0      & \multicolumn{1}{l|}{0.781} & \multicolumn{1}{l|}{0.779} & \multicolumn{1}{l|}{0.777}   & \multicolumn{1}{l|}{0.771}   & \multicolumn{1}{l|}{0.781}   & \multicolumn{1}{l|}{0.792}   & \multicolumn{1}{l|}{0.806**} & \multicolumn{1}{l|}{0.847**} & 0.884**                  \\ \hline
\multicolumn{1}{|l}{ZODS}                                             &               & \multicolumn{1}{l|}{0.779} & \multicolumn{1}{l|}{0.779} & \multicolumn{1}{l|}{0.776}   & \multicolumn{1}{l|}{0.778}   & \multicolumn{1}{l|}{0.776}   & \multicolumn{1}{l|}{0.775}   & \multicolumn{1}{l|}{0.774}   & \multicolumn{1}{l|}{0.787}   &                          \\ \hline
\multicolumn{1}{|l}{Heath+extinction}                & less local    & \multicolumn{1}{l|}{0.506} & \multicolumn{1}{l|}{0.512} & \multicolumn{1}{l|}{0.521}   & \multicolumn{1}{l|}{0.505}   & \multicolumn{1}{l|}{0.526}   & \multicolumn{1}{l|}{0.547**} & \multicolumn{1}{l|}{}        & \multicolumn{1}{l|}{}        &                          \\ \cline{3-11} 
\multicolumn{1}{|l}{}                                                 & local         & \multicolumn{1}{l|}{0.506} & \multicolumn{1}{l|}{0.510} & \multicolumn{1}{l|}{0.521}   & \multicolumn{1}{l|}{0.508}   & \multicolumn{1}{l|}{0.525}   & \multicolumn{1}{l|}{0.526}   & \multicolumn{1}{l|}{}        & \multicolumn{1}{l|}{}        &                          \\ \cline{3-11} 
\multicolumn{1}{|l}{Heath+extinction+remove extinct} & less local    & \multicolumn{1}{l|}{0.732} & \multicolumn{1}{l|}{0.727} & \multicolumn{1}{l|}{0.713*}  & \multicolumn{1}{l|}{0.736}   & \multicolumn{1}{l|}{0.700**} & \multicolumn{1}{l|}{0.693**} & \multicolumn{1}{l|}{}        & \multicolumn{1}{l|}{}        &                          \\ \cline{3-11} 
\multicolumn{1}{|l}{}                                                 & local         & \multicolumn{1}{l|}{0.722} & \multicolumn{1}{l|}{0.722} & \multicolumn{1}{l|}{0.716}   & \multicolumn{1}{l|}{0.715}   & \multicolumn{1}{l|}{0.687**} & \multicolumn{1}{l|}{0.639**} & \multicolumn{1}{l|}{}        & \multicolumn{1}{l|}{}        &                          \\ \cline{3-11} 
\multicolumn{1}{|l}{Heath}                           & less local    & \multicolumn{1}{l|}{0.749} & \multicolumn{1}{l|}{0.737} & \multicolumn{1}{l|}{0.730**} & \multicolumn{1}{l|}{0.742}   & \multicolumn{1}{l|}{0.716**} & \multicolumn{1}{l|}{0.711**} & \multicolumn{1}{l|}{}        & \multicolumn{1}{l|}{}        &                          \\ \cline{3-11} 
\multicolumn{1}{|l}{}                                                 & local         & \multicolumn{1}{l|}{0.736} & \multicolumn{1}{l|}{0.734} & \multicolumn{1}{l|}{0.736}   & \multicolumn{1}{l|}{0.732}   & \multicolumn{1}{l|}{0.703**} & \multicolumn{1}{l|}{0.646**} & \multicolumn{1}{l|}{}        & \multicolumn{1}{l|}{}        &                          \\ \hline
\multicolumn{1}{|l}{MCMC}                                             &               & \multicolumn{1}{l|}{0.442} & \multicolumn{1}{l|}{0.444} & \multicolumn{1}{l|}{0.449}   & \multicolumn{1}{l|}{0.461**} & \multicolumn{1}{l|}{0.478**} & \multicolumn{1}{l|}{0.510**} & \multicolumn{1}{l|}{0.540**} & \multicolumn{1}{l|}{0.569**} & 0.579**                  \\ \hline
\end{tabular}
\end{table}

%%%%%%%%%%%%%%%%%%\subsection{Cherries}
\begin{figure}[h!]
    \centering
    \includegraphics[scale = 0.4]{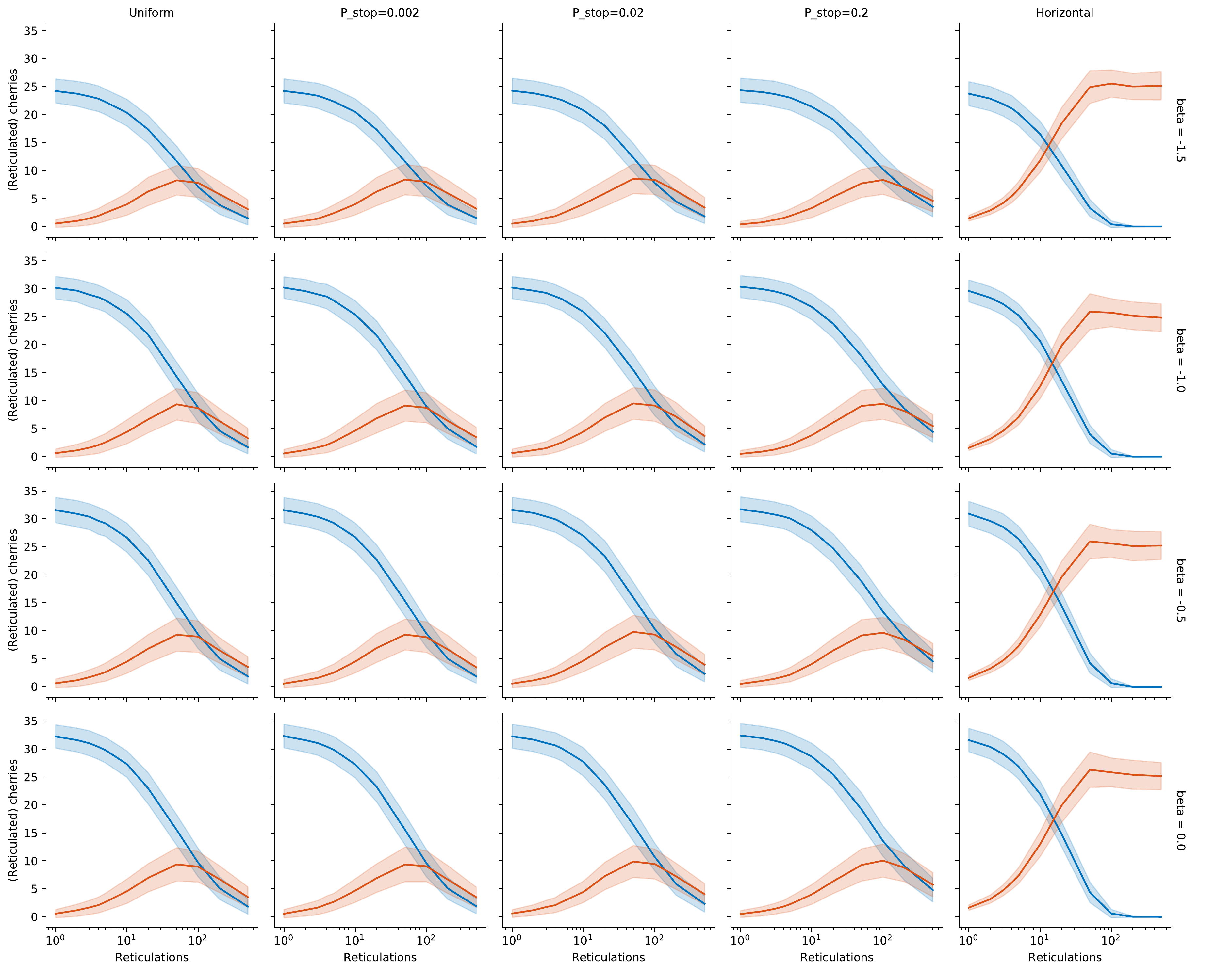}
    \caption{Dependence of the number of cherries (blue) and reticulated cherries (orange) for networks with 100 leaf nodes in the beta-splitting model. }
    \label{fig:CherriesBeta}
\end{figure}

% \begin{figure}[h!]
%     \centering
%     \includegraphics[scale = 0.4]{}    
%     \includegraphics[scale = 0.4]{}    
%     \includegraphics[scale = 0.4]{}
%     \caption{Dependence of the number of cherries (blue) and reticulated cherries (orange) when starting with a small caterpillar or balanced tree of $\{8,16,32\}$ leaves.}
%     \label{fig:CherriesCatBal}
% \end{figure}

% %%%%%%%%%%%%%%%%%%\subsection{Blobs}
% \begin{figure}[h!]
%     \centering
%     \includegraphics[scale = 0.52]{}
%     \caption{Dependence of the number and size and level of blobs dependent on the reticulation number starting with a small caterpillar or balanced tree of 32 leaves.}
%     \label{fig:BlobsCatBal}
% \end{figure}

\begin{figure}[h!]
    \centering
    \includegraphics[width = 0.48\textwidth]{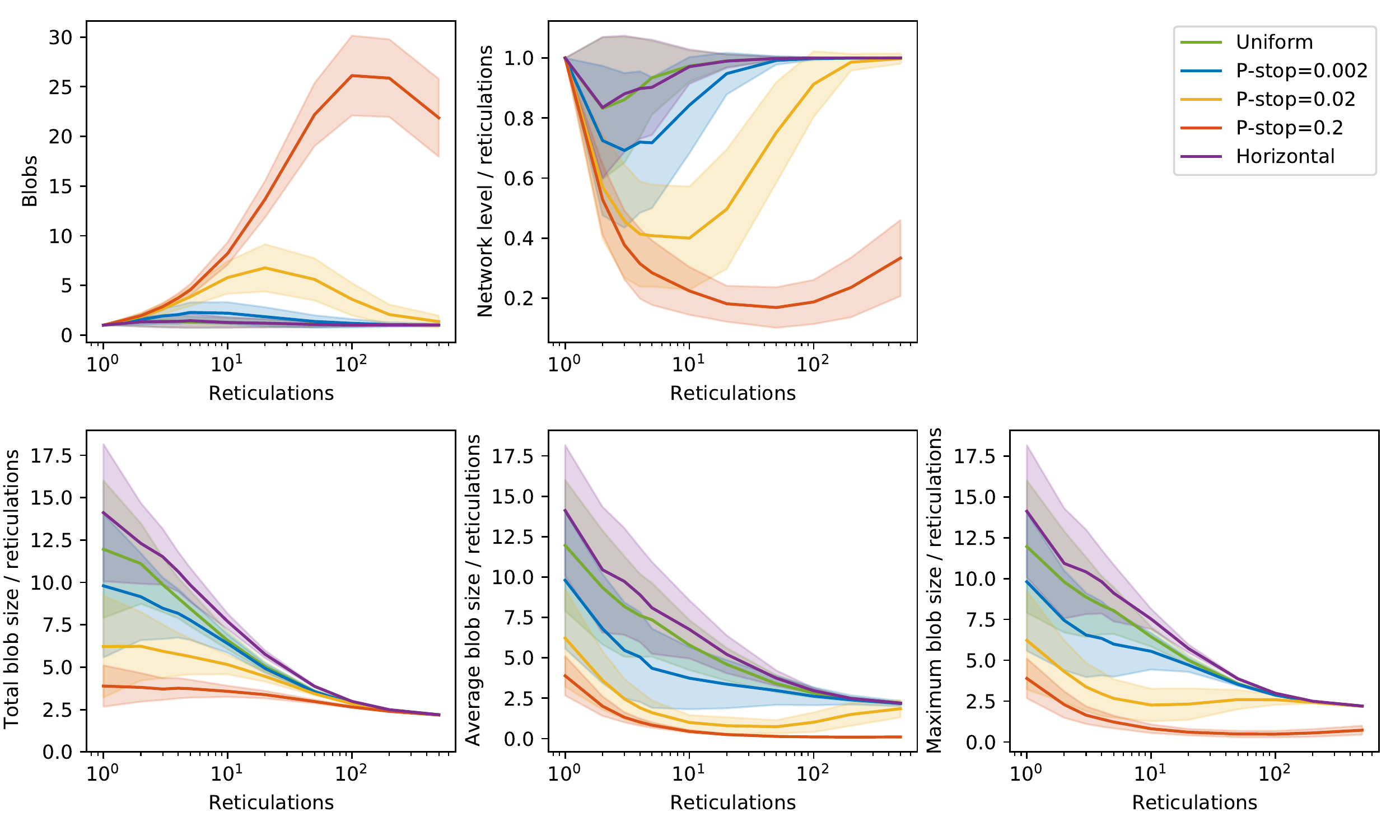}
    \includegraphics[width = 0.48\textwidth]{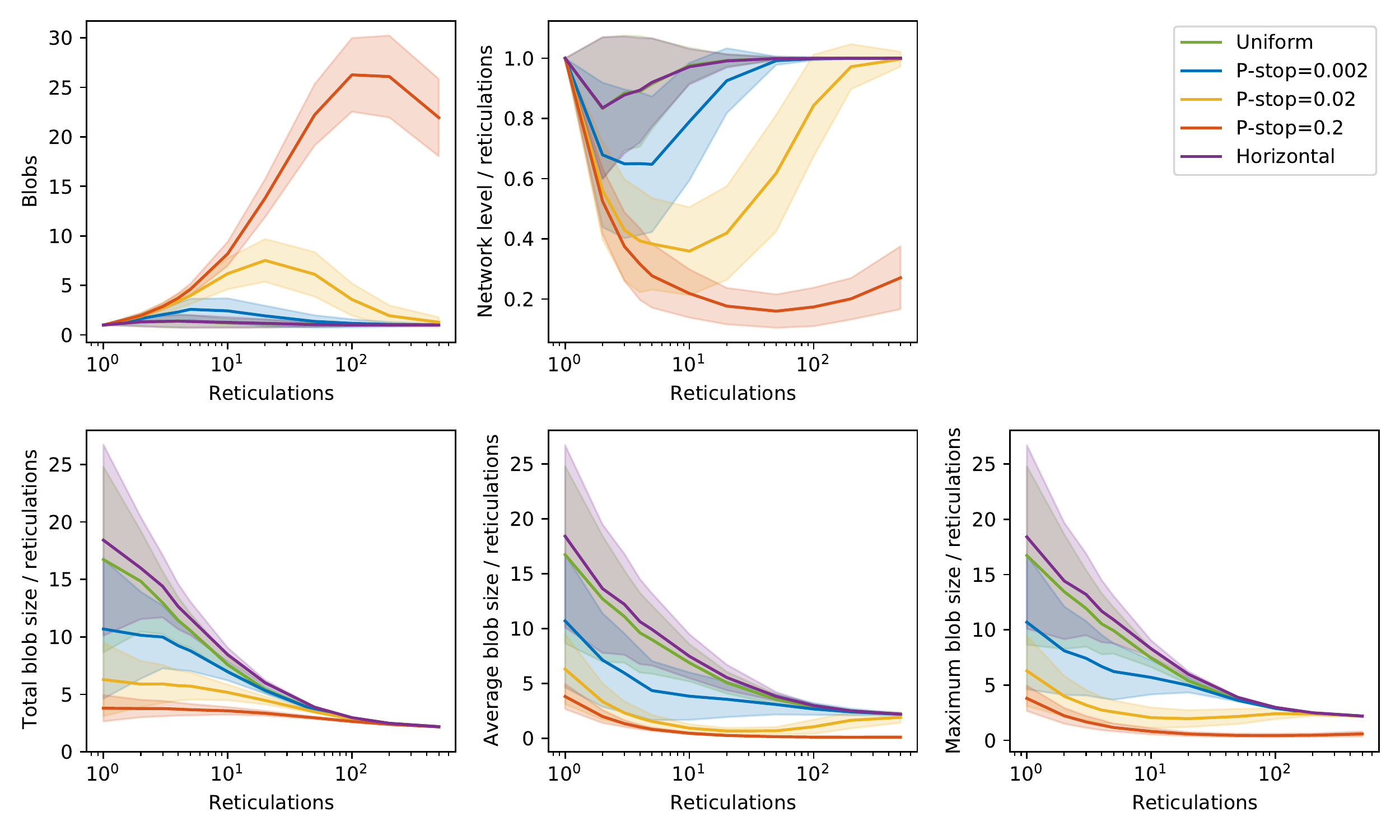}
    \caption{Dependence of the number and size and level of blobs dependent on the reticulation number for networks with 100 leaf nodes. Top: Beta-splitting beta=0.0 networks, Bottom: Beta-splitting beta=-1.5 networks.}
    \label{fig:BlobsBetaSplitByBeta}
\end{figure}

\newpage
\subsection{Heath Parameters and Extinction}\label{sec:HeathExt}
In the main text of this paper, we have not considered extinction. Even though our implementation of the Heath network generator has an option to add extinction. In Figures~\ref{fig:Classes_Heath},~\ref{fig:Balance_Heath},~\ref{fig:Cherries_Heath}, and~\ref{fig:Blobs_Heath}, we show the effect of extinction on the profiles, when using the parameters for the Heath network generator as detailed in Table~\ref{tab:HeathParameters}. We have two methods of implementing extinction: simply generating a network where the desired number of taxa consists of both extinct and extant taxa; or generating a network with extinct taxa, and removing them to obtain a network with the desired number of extant taxa. When extinct networks are removed, Table~\ref{tab:Classes} does not show reticulation numbers at which half of the generated networks are orchard or tree-based. This is because, for the reticulation numbers we could test for this model, none gave a fraction of less than half for these classes (Figure~\ref{fig:Classes_Heath}). Perhaps, this is because our networks have a small number of extinct taxa, and removing extinct taxa is necessary to produce networks that are not orchard or tree-based when adding reticulations as hybrid speciation.

\begin{figure}[h!]
    \centering
    \includegraphics[width=.7\textwidth]{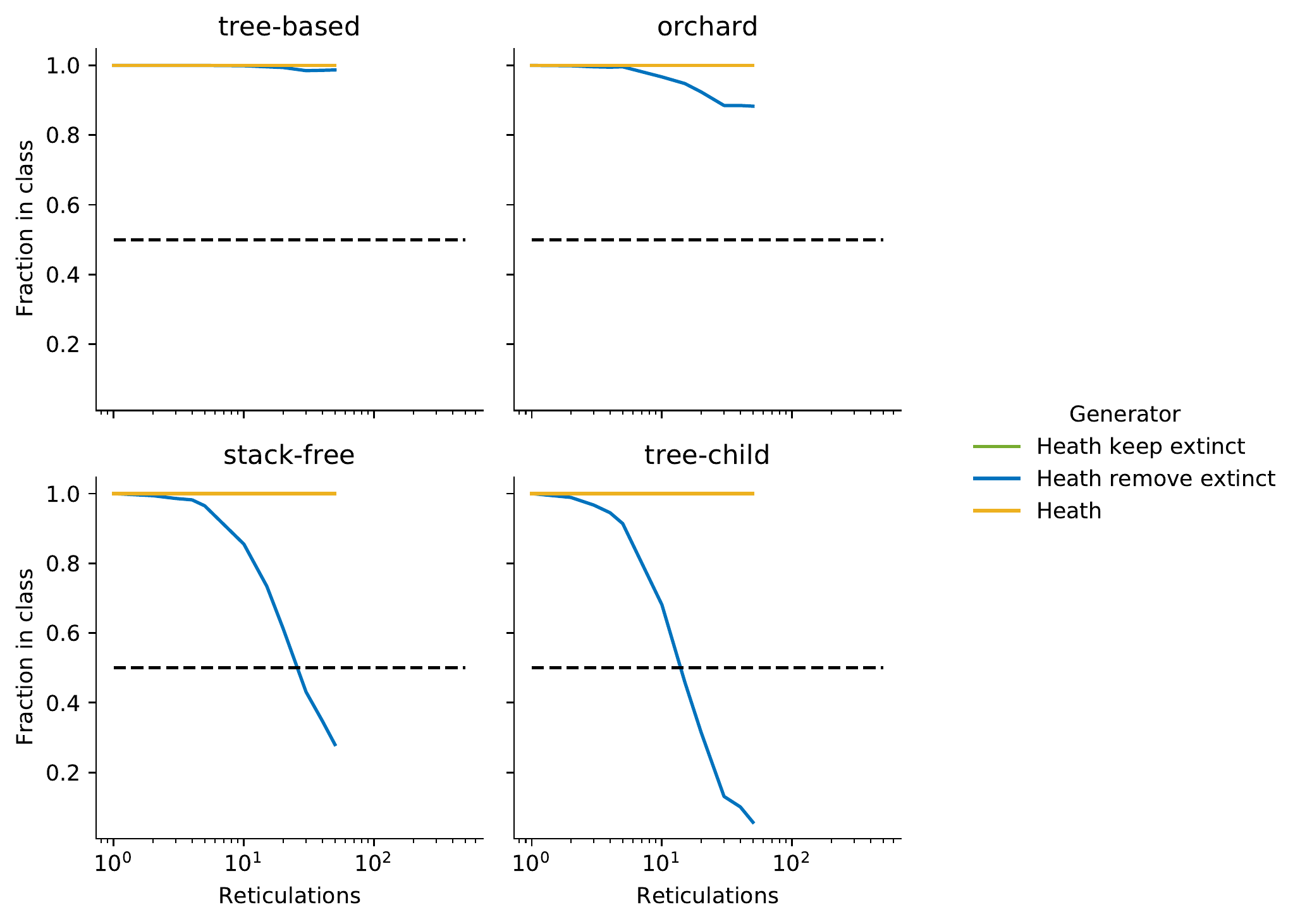}
    \caption{Proportions of generated networks with 100 leaves in different classes for the Heath network generator with different methods of handling extinction. The lines for ``Heath'' and ``Heath keep extinct'' are exactly the same, as these generators only produce tree-child networks; only when extinct lineages are removed, the Heath network generator can produce networks that are not tree-child, and even not tree-based. }
    \label{fig:Classes_Heath}
\end{figure}

\begin{figure}[h!]
    \centering
    \includegraphics[width=.8\textwidth]{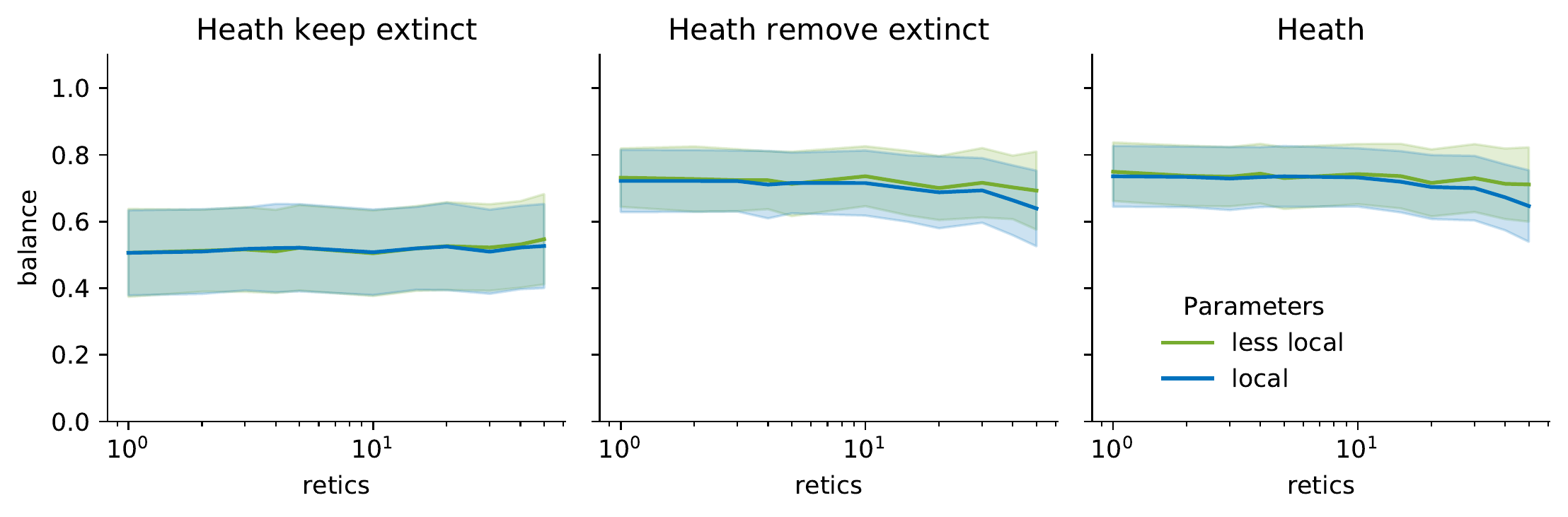}
    \caption{The balance for networks with 100 leaves generated by the Heath network generator for different methods of handling extinction.}
    \label{fig:Balance_Heath}
\end{figure}

\begin{figure}[h!]
    \centering
    \includegraphics[scale = 0.6]{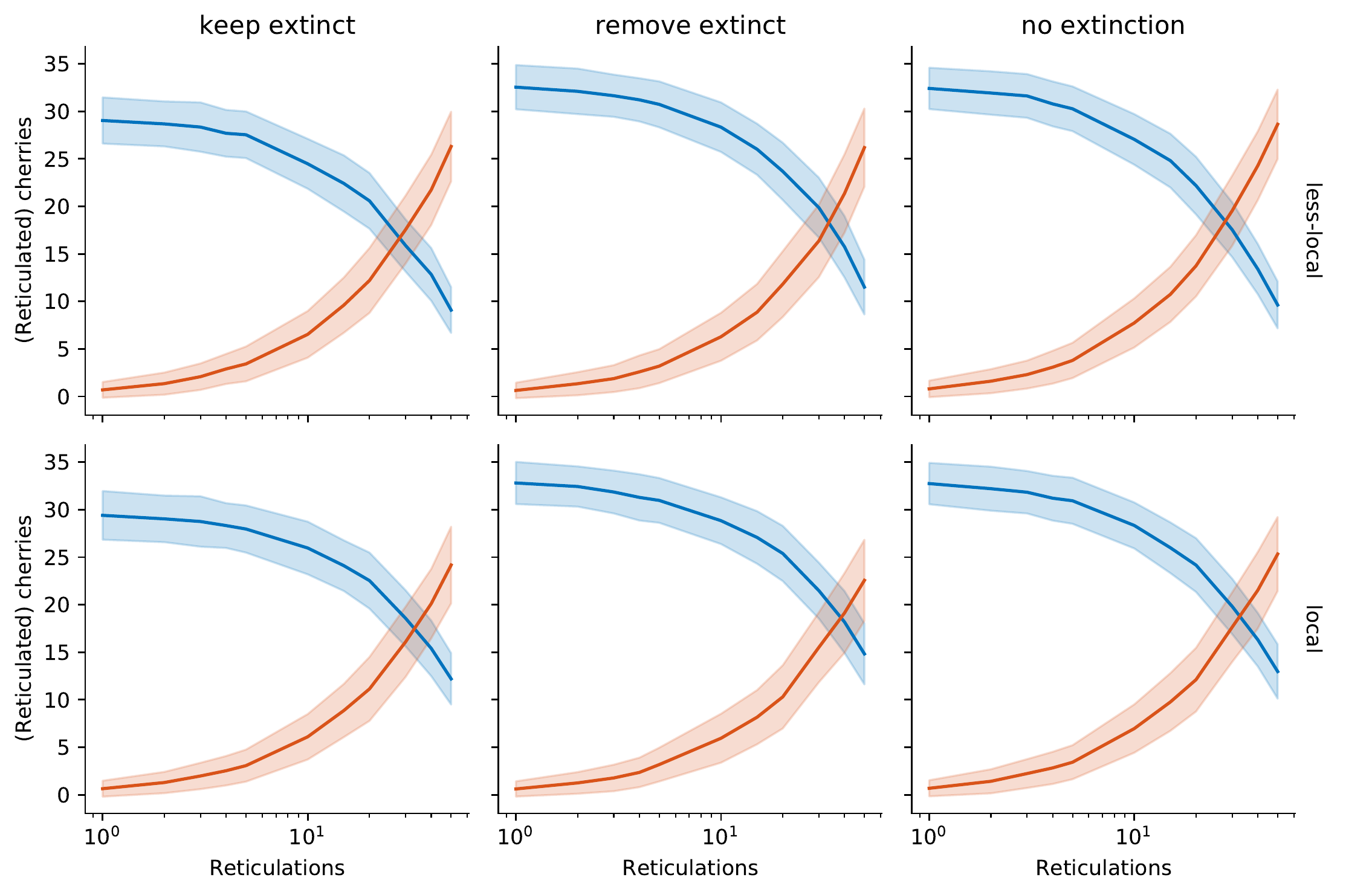}
    \caption{Dependence of the number of cherries (blue) and reticulated cherries (orange) for networks with 100 leaf nodes in the Heath network generator.}
    \label{fig:Cherries_Heath}
\end{figure}

\begin{figure}[h!]
    \centering
    \includegraphics[scale = 0.52]{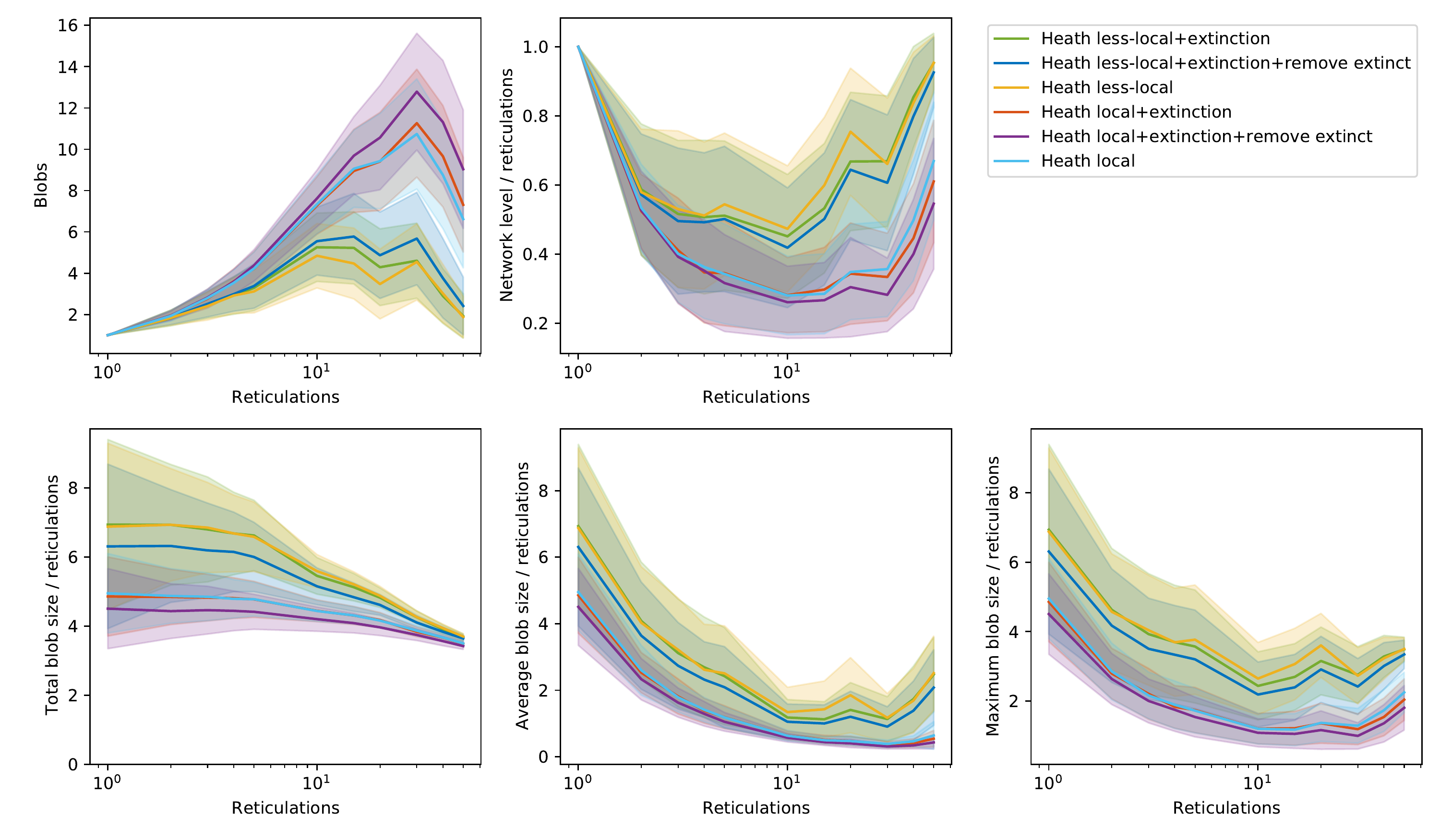}
    \caption{Dependence of the number and size and level of blobs dependent on the reticulation number for the Heath network generator.}
    \label{fig:Blobs_Heath}
\end{figure}

\begin{sidewaystable}[h!]
    \caption{Parameters used in the Heath network generator}
    \label{tab:HeathParameters}
    \centering
    \tiny
    \begin{tabular}{|l|c|c|c|c|c|c|c|c|c|c|c|c|c|c|c|c|c|c|c|}
    \hline
    Extinction                              & \multicolumn{6}{|c|}{No}                                            & \multicolumn{6}{|c|}{Yes - remove}                                  & \multicolumn{6}{|c|}{Yes - keep}   \\ \hline
    Reticulations                           & \multicolumn{2}{|c|}{$[1,2,3,4,5]$} & \multicolumn{2}{|c|}{$[10,15,20]$}  & \multicolumn{2}{|c|}{$[30,40,50]$}  & \multicolumn{2}{|c|}{$[1,2,3,4,5]$} & \multicolumn{2}{|c|}{$[10,15,20]$}  & \multicolumn{2}{|c|}{$[30,40,50]$} & \multicolumn{2}{|c|}{$[1,2,3,4,5]$} & \multicolumn{2}{|c|}{$[10,15,20]$}  & \multicolumn{2}{|c|}{$[30,40,50]$} \\ \hline
    Locality                                & local & less local & local & less local & local & less local & local & less local & local & less local & local & less local & local & less local & local & less local & local & less local \\ \hline
    update shape                            & 3.0 & 3.0 & 3.0 & 3.0 & 3.0 & 3.0 & 3.0 & 3.0 & 3.0 & 3.0 & 3.0 & 3.0 & 3.0 & 3.0 & 3.0 & 3.0 & 3.0 & 3.0  \\
    speciation rate mean                    & 1.0 & 1.0 & 1.0 & 1.0 & 1.0 & 1.0 & 1.0 & 1.0 & 1.0 & 1.0 & 1.0 & 1.0 & 1.0 & 1.0 & 1.0 & 1.0 & 1.0 & 1.0  \\
    speciation rate shape                   & 5.0 & 5.0 & 5.0 & 5.0 & 5.0 & 5.0 & 5.0 & 5.0 & 5.0 & 5.0 & 5.0 & 5.0 & 5.0 & 5.0 & 5.0 & 5.0 & 5.0 & 5.0  \\
    extinction rate mean                    & 0.0 & 0.0 & 0.0 & 0.0 & 0.0 & 0.0 & 0.5 & 0.5 & 0.5 & 0.5 & 0.5 & 0.5 & 0.5 & 0.5 & 0.5 & 0.5 & 0.5 & 0.5  \\
    extinction rate shape                   & 0.0 & 0.0 & 0.0 & 0.0 & 0.0 & 0.0 & 5.0 & 5.0 & 5.0 & 5.0 & 5.0 & 5.0 & 5.0 & 5.0 & 5.0 & 5.0 & 5.0 & 5.0  \\
    $h_l$                                   & 0.3 & 1.0 & 0.3 & 1.0 & 0.3 & 1.0 & 0.3 & 1.0 & 0.3 & 1.0 & 0.3 & 1.0 & 0.3 & 1.0 & 0.3 & 1.0 & 0.3 & 1.0  \\
    $h_r$                                   & 1.5 & 5.0 & 1.5 & 5.0 & 1.5 & 5.0 & 1.5 & 5.0 & 1.5 & 5.0 & 1.5 & 5.0 & 1.5 & 5.0 & 1.5 & 5.0 & 1.5 & 5.0  \\
    $h_{lr}$                                & 0.08 & 0.015 & 0.3 & 0.05 & 0.6 & 0.125 & 0.06 & 0.01 & 0.25 & 0.05 & 0.6 & 0.11 & 0.08 & 0.015 & 0.3 & 0.075 & 0.75 & 0.15 \\
    $h_{rr}$                                & 0.0 & 0.0 & 0.0 & 0.0 & 0.0 & 0.0 & 0.0 & 0.0 & 0.0 & 0.0 & 0.0 & 0.0 & 0.0 & 0.0 & 0.0 & 0.0 & 0.0 & 0.0  \\ \hline 
    \end{tabular}
\end{sidewaystable}

\end{document}